\let\oldsqrt\sqrt
\def\sqrt{\mathpalette\DHLhksqrt}
\def\DHLhksqrt#1#2{%
\setbox0=\hbox{$#1\oldsqrt{#2\,}$}\dimen0=\ht0
\advance\dimen0-0.2\ht0
\setbox2=\hbox{\vrule height\ht0 depth -\dimen0}%
{\box0\lower0.4pt\box2}}
\newcommand{\easiest}{\setlength{\unitlength}{3mm}
\begin{picture}(0,0)(4.7,-0.2)
\put(4.75,0.75){$\scriptscriptstyle{x_0}$}
\put(4.75,-0.75){$\scriptscriptstyle{x_1}$}
\put(6.5,0.75){$\scriptscriptstyle{y_0}$}
\put(6.5,-0.75){$\scriptscriptstyle{y_1}$}
\put(5,-0.4){\line(0,1){1}}
\put(6.75,-0.4){\line(0,1){1}}
\put(5.25,0.1){\line(1,0){0.525}}
\put(5.975,0.1){\line(1,0){0.525}}
\put(5.775,-0.2){\line(0,1){0.6}}
\put(5.975,-0.2){\line(0,1){0.6}}
\end{picture}
}
\newcommand{\easy}{\setlength{\unitlength}{3mm}
\begin{picture}(0,0)(0,0)
\put(0,0){$\scriptsize{x_0}$}
\put(0.9,0.15){\line(1,0){1}}
\put(2.05,0){$\scriptsize{y_0}$}
\end{picture}
}
\newcommand{\easyij}{\setlength{\unitlength}{3mm}
\begin{picture}(0,0)(0,0)
\put(0,0){$\scriptsize{x_i}$}
\put(0.9,0.15){\line(1,0){1}}
\put(2.05,0){$\scriptsize{y_j}$}
\end{picture}
}
\newcommand{\xoeasy}{\setlength{\unitlength}{3mm}
\begin{picture}(0,0)(0,0)
\put(0,0){$\scriptsize{0_0}$}
\put(0.9,0.15){\line(1,0){1}}
\put(2.05,0){$\scriptsize{x_0}$}
\end{picture}
}
\newcommand{\zoeasy}{\setlength{\unitlength}{3mm}
\begin{picture}(0,0)(0,0)
\put(0,0){$\scriptsize{0_0}$}
\put(0.9,0.15){\line(1,0){1}}
\put(2.05,0){$\scriptsize{e_{1_0}}$}
\end{picture}
}
\newcommand{\hard}{\setlength{\unitlength}{3mm}
\begin{picture}(0,0)(12.8,-0.2)
\put(12.75,0.75){$\scriptscriptstyle{x_0}$}
\put(12.75,-0.75){$\scriptscriptstyle{x_1}$}
\put(14.5,0.75){$\scriptscriptstyle{y_0}$}
\put(14.5,-0.75){$\scriptscriptstyle{y_1}$}
\put(13.4,0.875){\line(1,0){1}}
\put(14.4,-0.625){\line(-1,0){1}}
\put(13.25,-0.525){\line(1,1){1.2}}
\put(14.55,-0.425){\line(-1,1){1.2}}
\put(13,-0.4){\line(0,1){1}}
\put(14.75,0.6){\line(0,-1){1}}
\end{picture}
}
\newcommand{\hardest}{\setlength{\unitlength}{3mm}
\begin{picture}(0,0)(8.8,-0.2)
\put(8.75,0.75){$\scriptscriptstyle{x_0}$}
\put(8.75,-0.75){$\scriptscriptstyle{x_1}$}
\put(10.5,0.75){$\scriptscriptstyle{y_0}$}
\put(10.5,-0.75){$\scriptscriptstyle{y_1}$}
\put(9.5,0.875){\line(1,0){1}}
\put(9.5,-0.625){\line(1,0){1}}
\put(10,-0.5){\line(0,1){0.5}}
\put(9.7,0){\line(1,0){0.6}}
\put(9.7,0.2){\line(1,0){0.6}}
\put(10,0.2){\line(0,1){0.5}}
\end{picture}
}
\newcommand{\constconn}{\setlength{\unitlength}{3mm}
\begin{picture}(0,0)(0,0)
\put(0,0){$\scriptsize{x_0}$}
\put(0.9,0.15){\line(1,0){1}}
\put(2.05,0){$\scriptsize{x_1}$}
\end{picture}
}
\newcommand{\dirconno}{\setlength{\unitlength}{3mm}
\begin{picture}(0,0)(12.8,-0.2)
\put(12.75,0.75){$\scriptscriptstyle{x_0}$}
\put(12.75,-0.75){$\scriptscriptstyle{x_1}$}
\put(14.5,0.75){$\scriptscriptstyle{y_0}$}
\put(14.5,-0.75){$\scriptscriptstyle{y_1}$}
\put(13.5,0.875){\vector(1,0){1}}
\put(14.4,-0.625){\vector(-1,0){1}}
\put(13,-0.4){\vector(0,1){1}}
\put(14.75,0.6){\vector(0,-1){1}}
\end{picture}
}
\newcommand{\dirconnt}{\setlength{\unitlength}{3mm}
\begin{picture}(0,0)(12.8,-0.2)
\put(12.75,0.75){$\scriptscriptstyle{x_0}$}
\put(12.75,-0.75){$\scriptscriptstyle{x_1}$}
\put(14.5,0.75){$\scriptscriptstyle{y_0}$}
\put(14.5,-0.75){$\scriptscriptstyle{y_1}$}
\put(13.25,-0.525){\vector(1,1){1.2}}
\put(14.55,-0.525){\vector(-1,1){1.2}}
\put(13,0.6){\vector(0,-1){1}}
\put(14.75,0.6){\vector(0,-1){1}}
\end{picture}
}
\newcommand{\dirconnth}{\setlength{\unitlength}{3mm}
\begin{picture}(0,0)(12.8,-0.2)
\put(12.75,0.75){$\scriptscriptstyle{x_0}$}
\put(12.75,-0.75){$\scriptscriptstyle{x_1}$}
\put(14.5,0.75){$\scriptscriptstyle{y_0}$}
\put(14.5,-0.75){$\scriptscriptstyle{y_1}$}
\put(14.4,0.875){\vector(-1,0){1}}
\put(14.4,-0.625){\vector(-1,0){1}}
\put(13.25,-0.525){\vector(1,1){1.2}}
\put(13.25,0.575){\vector(1,-1){1.2}}
\end{picture}
}
\newcommand{\spinconfig}[9]{\setlength{\unitlength}{#9cm}

\put(0,0.95){\line(1,0){1.6}}
\put(0,1.05){\line(1,0){1.6}}

\put(-0.1,1.2){$#1$}
\put(0.3,1.2){$#2$}
\put(1.0,1.2){$#3$}
\put(1.4,1.2){$#4$}

\put(-0.1,0.7){$#5$}
\put(0.3,0.7){$#6$}
\put(1.0,0.7){$#7$}
\put(1.4,0.7){$#8$}

}
\theoremstyle{definition}
\theoremstyle{remark}
\theoremstyle{plain}   
\numberwithin{equation}{section}
\newtheorem{partition}{Proposition}[section]
\newtheorem{factors}[partition]{Lemma}
\newtheorem{correlations}[partition]{Proposition}
\newtheorem{correlationinequalities}[partition]{Corollary}
\newtheorem{main}[partition]{Theorem}
\newtheorem*{Neel}{Theorem}
\begin{document}

\title{\Large{Existence of N\'eel order in the S=1 bilinear-biquadratic Heisenberg model via random loops}}
\author{Benjamin Lees\footnote{b.lees@warwick.ac.uk}\\
\small{Department of Mathematics, University of Warwick,
Coventry, CV4 7AL, United Kingdom}
}
\date{}
\maketitle

\begin{abstract}
\noindent
We consider the general spin-1 SU(2) invariant Heisenberg model with a two-body interaction. A random loop model is introduced and relations to quantum spin systems is proved. Using this relation it is shown that for dimensions 3 and above N\'eel order occurs for a large range of values of the relative strength of the bilinear ($-J_1$) and biquadratic ($-J_2$) interaction terms. The proof uses the method of reflection positivity and infrared bounds. Links between spin correlations and loop correlations are proved.
\end{abstract}

\section{Introduction}
\label{sec Intro}
\subsection{Historical Setting}

In this work properties of the spin-1 Heisenberg model are deduced using a random loop model first introduced in the work of Nachtergaele \cite{N}. Random loop models have been around since the work of T\'oth \cite{T} and Aizenman and Nachtergaele \cite{A-N}. In \cite{T} a lower bound was obtained on the pressure of the spin-$\frac{1}{2}$ Heisenberg ferromagnet; this improved the bound of Conlon and Solovej \cite{C-S}. Sharp bounds have recently been found \cite{C-G-S}. The loop model presented in \cite{A-N} applies to the spin-$\frac{1}{2}$ Heisenberg antiferromagnet. Both spin models can be applied to higher spins; for a review of these models we refer, for example, to \cite{G-U-W}. The work of Ueltschi \cite{U} combined and extended  these loop models. It has recently seen attention for its usefulness in several aspects of quantum spin systems. In \cite{U} it is shown that there is long-range order in various spin systems, including nematic order in the spin-1 system. The work of Crawford, Ng and Starr \cite{C-N-S} on emptiness formation also makes use of the model, as does the work of  Bj\"ornberg and Ueltschi \cite{B-U} on decay of correlations in the presence of a transverse magnetic field. The loop model presented here comes from \cite{N}, it is similar in flavour to the Aizenman-Nachtergaele-T\'oth-Ueltschi representation. See Refs. \cite{A-N,T,U,N,N2} and references therein.

Quantum spin systems are currently a very active area of research. The growth of popularity of probabilistic representations has allowed new methods to be applied to these systems with many interesting results. This work looks at the general SU(2) invariant spin-1 Heisenberg model with a two-body interaction
\begin{equation}
 H^{J_1,J_2}_{\Lambda}=-\sum_{\{x,y\}\in\mathcal{E}}\left(J_1\left(\mathbf{S}_x\cdot\mathbf{S}_y\right)+J_2\left(\mathbf{S}_x\cdot\mathbf{S}_y\right)^2\right).
\end{equation}
Here we will have $x\in\Lambda\subset\mathbb{Z}^d$ and $\mathcal{E}$ the set of nearest neighbour edges. The operators $\mathbf{S}=(S^1,S^2,S^3)$ are the spin-1 matrices, see section \ref{sec spin1model} for details of the model. The work in \cite{U} shows that in the region $0\leq J_1\leq \frac{1}{2}J_2$ the system exhibits \emph{nematic order} in the thermodynamic limit if the temperature is low enough and the dimension is high enough. Nematic order was also shown independently using different methods in \cite{T-T-I}. It is also shown that if $\Lambda$ is bipartite there will be \emph{N\'eel} order for $J_1=0\leq J_2$ at low temperature. This corresponds to the occurrence of infinite loops in the related loop model. Alternatively in $d\leq2$ infinite loops should not occur, it is proved in \cite{F-U} that this is the case for $J_2=0$, the extension to $J_2>0$ should be straightforward. The first proof of continuous symmetry breaking was shown by Fr\"ohlich, Simon and Spencer \cite{F-S-S} for the classical Heisenberg ferromagnet (and hence antiferromagnet). This result was extended by Dyson, Lieb and Simon \cite{D-L-S} to the quantum antiferromagnet. The result excluded the case $d=3$ and $S=\frac{1}{2}$, it was extended to this case in the work of Kennedy, Lieb and Shastry \cite{K-L-S}. These works all used the method of reflection positivity and infrared bounds. For information on reflection positivity see Refs. \cite{B,B-C-S,F-I-L-S,F-I-L-S2} and references therein. The Heisenberg ferromagnet is not reflection positive and hence does not benefit from these methods.

\subsection{Main result}

In this article we use the method of reflection positivity and infrared bounds on a random loop model. Links between correlations in the spin model and probabilities of events in the loop model are also derived in section 5. We focus on the quadrant $J_1\leq 0\leq J_2$, see Fig. \ref{phasediagram} (in fact we need only work on the unit circle $J_1^2+J_2^2=1$ but the quadrant is more convenient pictorially). The following result concerning N\'eel order follows from Proposition \ref{spincorrprop} (a), Theorem \ref{thmneel} and the discussion that follows. For the precise statements see Section \ref{sec neelorder}.
\\
\begin{Neel}
For $\Lambda\subset \mathbb{Z}^d$ a box of even side length and $d\geq 3$ there exists $\alpha=\alpha(d)>0$ such that for $J_1\leq0\leq J_2$ if $-J_1/J_2 >\alpha$ then there exists $c=c(\alpha,d)>0$ such that

\begin{equation}
 \lim_{\beta\to\infty}\lim_{|\Lambda|\to\infty}\frac{1}{|\Lambda|}\sum_{x\in\Lambda}(-1)^{\|x\|}\langle S_0^3S_x^3\rangle_{\Lambda,\beta}\geq c.
\end{equation}
Furthermore $\alpha(d)\to0$ as $d\to\infty$.
\end{Neel}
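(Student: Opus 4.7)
My plan is to follow the reflection-positivity and infrared-bound scheme of \cite{F-S-S,D-L-S}, implemented through the random loop representation and tied back to spins via the correspondence of Section~5. First I would use Proposition \ref{spincorrprop}(a) to re-express the staggered two-point function $(-1)^{\|x\|}\langle S_0^3S_x^3\rangle_{\Lambda,\beta}$ in terms of a loop-event probability, so that N\'eel order reduces to producing a positive density of sites lying on macroscopic loops; the remainder of the proof can then be carried out at the level of the Gibbs state (equivalently, the loop measure).

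Next I would verify reflection positivity in the quadrant $J_1\leq 0\leq J_2$ through lattice hyperplanes, both across edges and through sites, after conjugation by the Marshall sublattice rotation $U=\prod_{x\in B}e^{i\pi S_x^2}$. The bilinear term then has coupling $-J_1\geq 0$ and is handled as in the classical antiferromagnetic case; the biquadratic term is $U$-invariant, and the factorization $(\mathbf{S}_x\cdot\mathbf{S}_y)^2=\sum_{a,b=1}^3 S_x^aS_x^b\,S_y^aS_y^b$ puts the cross-plane interactions into the form $\sum_i A_i\otimes\theta(A_i)$ required for RP, with $-J_2\leq 0$ providing the correct sign. Gaussian domination then yields the infrared bound
\begin{equation*}
(\hat S^3(k),\hat S^3(-k))_{\mathrm{Duh},\Lambda,\beta}\leq \frac{1}{2\beta\,\mathcal{E}(k)},
\end{equation*}
where $\mathcal{E}(k)\asymp\sum_{j=1}^d(1-\cos k_j)$ with coefficient depending on $J_1,J_2$ and local edge expectations.

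The Falk--Bruch inequality converts the Duhamel two-point function into the thermal one; combined with the sum rule $\frac{1}{|\Lambda|}\sum_x\langle(S_x^3)^2\rangle=\tfrac23$ and Plancherel on $\Lambda$, and separating the antiferromagnetic mode $k=(\pi,\dots,\pi)$, long-range order is obtained as soon as an integral of the form
\begin{equation*}
\frac{1}{(2\pi)^d}\int_{[-\pi,\pi]^d}\sqrt{\frac{b(k)}{\mathcal{E}(k)}}\,dk
\end{equation*}
is strictly smaller than an explicit constant, where $b(k)$ bounds $\langle[\hat S^3(-k),[H_\Lambda,\hat S^3(k)]]\rangle/|\Lambda|$. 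The principal obstacle, and exactly where the hypothesis $-J_1/J_2>\alpha(d)$ enters, is controlling this double commutator when $J_2>0$: commuting $\hat S^3(k)$ through $(\mathbf{S}_x\cdot\mathbf{S}_y)^2$ produces additional $O(J_2)$ contributions not manifestly dominated by the bilinear dispersion, so $-J_1$ must be large enough relative to $J_2$ for the $-J_1$ part of $\mathcal{E}(k)$ to absorb them and the integrability criterion to close. Since the lattice Green's function integral $\int_{[-\pi,\pi]^d}(\sum_j(1-\cos k_j))^{-1}dk\to 0$ as $d\to\infty$, the slack grows with dimension so $\alpha(d)\to 0$ is automatic, and a final application of Proposition \ref{spincorrprop}(a) transfers the resulting estimate back into the claimed lower bound $c>0$ on the staggered magnetization.
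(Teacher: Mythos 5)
Your skeleton (Marshall rotation $U=\prod_{x\in\Lambda_B}e^{i\pi S_x^2}$, reflection positivity in the quadrant, Gaussian domination, infrared bound, Falk--Bruch, sum rule, extraction of the staggered mode) is indeed the skeleton of the paper's proof, but the proposal omits the two ingredients that actually make the theorem true as stated, and they are precisely the point of the paper. First, you never specify how the $O(J_2)$ part of the double commutator is to be bounded. After the Falk--Bruch step one faces quartic correlations such as $\langle S_0^1S_0^3S_{e_1}^1S_{e_1}^3\rangle$ and $\langle (S_0^1)^2(S_{e_1}^3)^2\rangle$, while the infrared bound \eqref{infraredbound} obtained from Gaussian domination carries only $-J_1$ in the denominator (the field couples through the bilinear term only; it is not a dispersion ``depending on $J_1,J_2$ and local edge expectations'' as you assert) and therefore degrades as $|J_1|/J_2$ decreases. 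Saying that $-J_1$ ``must be large enough to absorb'' these terms restates the difficulty rather than resolving it. The paper's resolution is Proposition \ref{spincorrprop} (b)--(e): the loop representation converts the quartic correlations into loop connection probabilities, giving \eqref{doublecommbound}, i.e.\ the double commutator is at most $|\Lambda_L|\varepsilon(k+\pi)(-2J_1+8J_2)$ times the nearest-neighbour connection probability --- the same quantity that equals $\pm\langle S_0^3S_{e_1}^3\rangle$ by part (a). Crude operator-norm bounds do not produce this proportionality, and without it the final estimate weakens substantially.

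Second, you use only the sum rule $\langle (S_x^3)^2\rangle=\tfrac23$, i.e.\ the Fourier identity at $y=0$. That single route leads to a criterion governed by $K_d=\frac{1}{(2\pi)^d}\int\sqrt{\varepsilon(k+\pi)/\varepsilon(k)}\,\mathrm{d}k$, which tends to $1$, not $0$, as $d\to\infty$; combined with any fixed bound on the $J_2$ terms it can give \emph{some} $\alpha(d)$ for $d\geq 3$, but the admissible ratio stays bounded away from the full quadrant, so your claim that ``$\alpha(d)\to0$ is automatic'' does not follow. The lattice Green's function integral $\int\varepsilon(k)^{-1}\mathrm{d}k$ you invoke is not the quantity controlling this quantum infrared-bound argument. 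The paper obtains $\alpha(d)\to0$ by the Kennedy--Lieb--Shastry device of applying the Fourier identity at both $y=0$ and $y=e_1$, which yields the two complementary bounds of Theorem \ref{thmneel} involving $K_d$ and $I_d$; at least one is positive whenever $I_dK_d<(-4J_1)/(-J_1+4J_2)$, and since $I_d\to0$ while $K_d\to1$ the admissible region grows to the whole quadrant. Note also that the $I_d$ bound is only useful because, via \eqref{doublecommbound}, the double commutator is proportional to the same connection probability that appears as $\langle S_0^3S_{e_1}^3\rangle$ on the favourable side of the inequality --- so the two omissions are not independent: without the loop-model correlation identities the two-bound trick has nothing to bite on, and without the two-bound trick the limit statement $\alpha(d)\to0$ is out of reach.
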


Note that the ``liminf'' version of this theorem will be proved. The limits exist but this will not be proved here.
It is shown in the discussion after Theorem \ref{thmneel} that this sum is positive if
\begin{equation}
 I_dK_d<(-4J_1)/(-J_1+4J_2).
\end{equation}
$I_d$ and $K_d$ are integrals to be introduced in \eqref{integrals}. Their values for various $d$ are given in the table below.
\begin{center}
\setlength{\unitlength}{4mm}
\begin{figure}[t!]

\includegraphics[width=10cm]{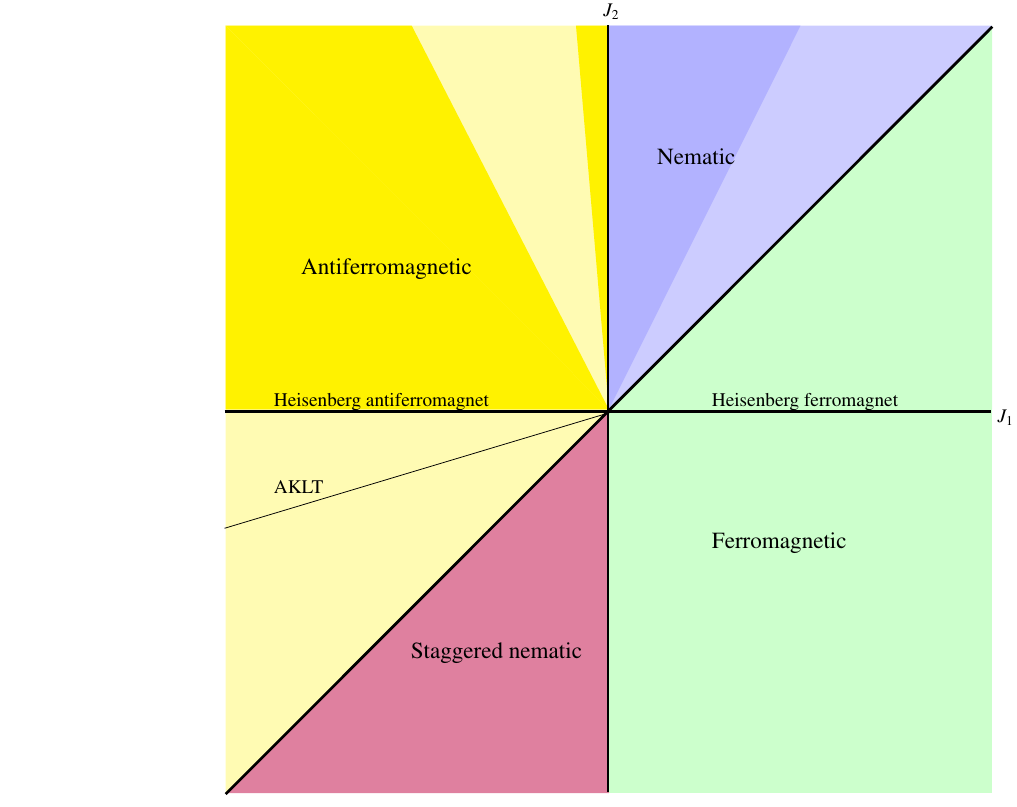}

\caption{\footnotesize{The phase diagram for the general SU(2) invariant spin-1 model. Regions that are shaded darker have rigorous proofs of the relevant phases. The line $J_1<0$, $J_2=0$ is the Heisenberg antiferromagnet where antiferromagnetic order has been proven \cite{D-L-S}, N\'eel order extends into the dark yellow region. The dark blue region $0\leq J_1\leq \frac{1}{2}J_2$ has nematic order at low temperatures \cite{U}, with N\'eel order on the line $J_2>0$, $J_1=0$. The adjacent dark yellow region has been proved to exhibit nematic order in high enough dimension \cite{L}. Antiferromagnet order is expected here but is not yet proved.}}
\label{phasediagram}
\end{figure}
\end{center}
\small
\begin{center}
\begin{tabular}{ l l l }
\hline
\hline
  $d$ & $I_d$ & $K_d$ \\
  3 & 0.349882 & 1.15672 \\
  4 & 0.253950 & 1.09441 \\
  5 & 0.206878 & 1.06754 \\
  6 & 0.177716 & 1.05274 \\
\hline
\hline
\end{tabular}
\end{center}
\normalsize
It can be shown \cite{D-L-S,K-L-S} that $I_d\to0$ and $K_d\to1$ as $d\to\infty$ and that both are decreasing in $d$. This means we can prove that the region where N\'eel order occurs will increase to the entire quadrant $J_1\leq0\leq J_2$ as $d\to\infty$ i.e., the ratio $\alpha(d)$ is decreasing.  In $d=3$ there is N\'eel order at low temperature in the spin system for $-J_1/J_2<0.46$, this is a triangular region of angle $65^\circ$ measured from the $J_1$ axis.

%In order to extend into the region $J_2>0$ we use the Falk-Bruch inequality \cite{F-B}. This means we must control the double %commutator terms coming the biquadratic ($J_2$) interaction, this is achieved by appealing to the random loop model, it is %certainly not clear how to handle these terms directly.

Reflection positivity for this quadrant is already known; for $J_1<0=J_2$  it was shown in \cite{D-L-S} and for $J_1=0<J_2$ one can see, for example, \cite{L} lemma 3.4 for an explicit proof. It was proved in \cite{D-L-S} that N\'eel order occurs for $J_1<0=J_2$, it is clear the result extends to a neighbourhood of the axis $J_1<0<J_2$ with $J_2$ sufficiently small. However it is impossible to extend the result concerning N\'eel order any significant amount without some new results. This is where the loop model has been essential. Indeed in \cite{D-L-S} an infrared bound is obtained of the form
\begin{equation}\label{DLSIRB}
 \widehat{(S_0^3,S_x^3)}_{Duh}(k)\leq \frac{1}{2(-J_1)\varepsilon(k)}
\end{equation}
where $(A,B)_{Duh}$ is the Duhamel correlation function and $\varepsilon(k)=2\sum_{i=1}^d(1-\cos k_i)$ for $k\in\Lambda^*$. Notice that this bound becomes weaker as $|J_1|$ decreases (equivalently on the unit circle as $|J_1|/|J_2|$ decreases). Transferring this bound to $\widehat{\langle S_0^3S_x^3\rangle}_\beta(k)$ requires the Falk-Bruch inequality, which would involve dealing with the term $\langle [\hat{S}_{-k}^3,[J_2\left(\mathbf{S}_x\cdot\mathbf{S}_y\right)^2,\hat{S}_k^3]]\rangle_\beta$. After some calculation one obtains correlations in Proposition \ref{spincorrprop} such as $\langle S_x^1S_y^1S_x^3S_y^3\rangle_\beta$. Hence to work directly in the quantum system using the methods of \cite{D-L-S} one must obtain good bounds on these correlations. Simple bounds such as taking the operator norm are not sufficient due to the weakening of \eqref{DLSIRB} as $|J_1|/|J_2|$ decreases. Without using the loop model it is not clear how to obtain such bounds currently.

The random loop model is presented in sections \ref{sec loopmodel} and \ref{sec spinconfigs}. The spin-1 Heisenberg model is introduced in Section \ref{sec spin1model}. In Section \ref{sec looprep} the connection between the loop model and the quantum system is proved. In particular it is shown how to write various correlation functions in terms of probabilities of events in the loop model; some of these correlations are also presented in \cite{U2}. In Section \ref{sec neelorder} the main result concerning N\'eel order is presented and proved.

\section{The random loop model}
\label{sec loopmodel}

We now introduce the loop model presented in \cite{N}. To begin we take a finite set of vertices, $\Lambda$, with a set of edges, $\mathcal{E}\subset\{\{x,y\}|x,y\in\Lambda,x\neq y\}$. We associate to this lattice a new lattice, $\tilde\Lambda$, and edge set, $\tilde{\mathcal{E}}$:
\begin{align}
\tilde{\Lambda}=&\Lambda\times\{0,1\},
\\
\tilde{\mathcal{E}}=&\{\{(x,i),(y,j)\}|i,j\in\{0,1\}, \{x,y\}\in\mathcal{E}\}.
\end{align}
There are two lattice sites in $\tilde\Lambda$ for every site in $\Lambda$ and four edge in $\tilde{\mathcal{E}}$ for each edge in $\mathcal{E}$. We will write $x_0,x_1$ in place of $(x,0),(x,1)$.
\par
For $\beta>0$ consider a process, $\rho_{J_1,J_2}$, consisting of a Poisson point process on $\mathcal{E}\times[0,\beta]$ and a uniform measure on segments of $\Lambda\times[0,\beta]$ between events of the Poisson point process. The Poisson point process has two events that we will refer to as `single bars' and `double bars'. Note that this process is on the edge set $\mathcal{E}$, the events define corresponding events on the edge set $\tilde{\mathcal{E}}$. The single bars will occur at rate $-2J_1$ and double bars at rate $J_2$ for $J_1\leq 0\leq J_2$. The rate for the single bars is written in this way to be consistent with the connection to the quantum spin system that will be introduced in Section \ref{sec spin1model}. The interval $[0,\beta]$ will be referred to as a time interval. The uniform measure is on two possibilities, ``crossing'' and ``parallel''. How to build loops from these events is described in detail below, see Fig. \ref{eventtypes} for pictorial representations of the events.
\begin{figure}[t]
 \includegraphics[width=12cm]{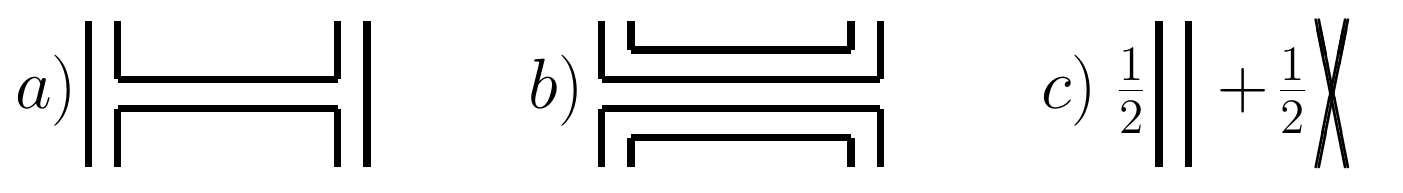}
 \caption{Events of the process $\rho_{J_1,J_2}$, a) represents single bars, b) represents double bars and c) represents the uniform measure on vertical segments being either parallel or crossing.}
 \label{eventtypes}
\end{figure}

\par
We first define the single and double bars. Single bars occur at a point $(x,y,t)$ for $\{x,y\}\in\mathcal{E}$. We define the corresponding geometric event on $\tilde{\mathcal{E}}$ as a bar joining $x_1$ and $y_0$ at time $t$. Double bars occur at a point $(x,y,t)$ and the corresponding event on $\tilde{\mathcal{E}}$ is a bar joining $x_1$ and $y_0$ and a bar joining $x_0$ and $y_1$, both at time $t$. A \emph{loop} of length $l$ is then a map $\gamma:[0,\beta l]_{per}\to\tilde{\Lambda}\times[0,\beta]_{per}$ such that $\gamma(s)\neq\gamma(t)$ if $s\neq t$, $\gamma$ is piecewise differentiable with derivate $\pm 1$ where it exists. If $s$ is a point of non-differentiability then $\{\gamma(s-),\gamma(s+)\}\in\tilde{\mathcal{E}}$. Loops with the same support and different parameterisations are identified. For a realisation $\omega$ of $\rho_{J_1,J_2}$ we associate a set of loops as follows: Starting at a point $(x_i,s)\in\tilde{\Lambda}\times[0,\beta]$ we move upwards (i.e. in direction of increasing $s$). If a bar is met at time $t$ it is crossed and we then continue in the opposite direction from $(y_j,t)$, where $y_j$ is the other site associated to the bar. Each maximal vertical segment between bars $(x_0,x_1)\times[s,t]$ (i.e. bars involving the site $x$ occur at times $s$ and $t$ and no bar involving $x$ occurs for $u$ such that $s<u<t$) is either parallel (nothing happens) or crossing (the sites $x_0$ and $x_1$ are exchanged). If time $\beta$ is reached the periodic time conditions mean we continue in the same direction starting from time 0. We denote by $\mathcal{L}(\omega)$ the set of all loops associated to a realisation $\omega$. Loops are most easily understood pictorially, see Fig. \ref{loopexample}.
\begin{figure}[t]
\includegraphics[width=12cm]{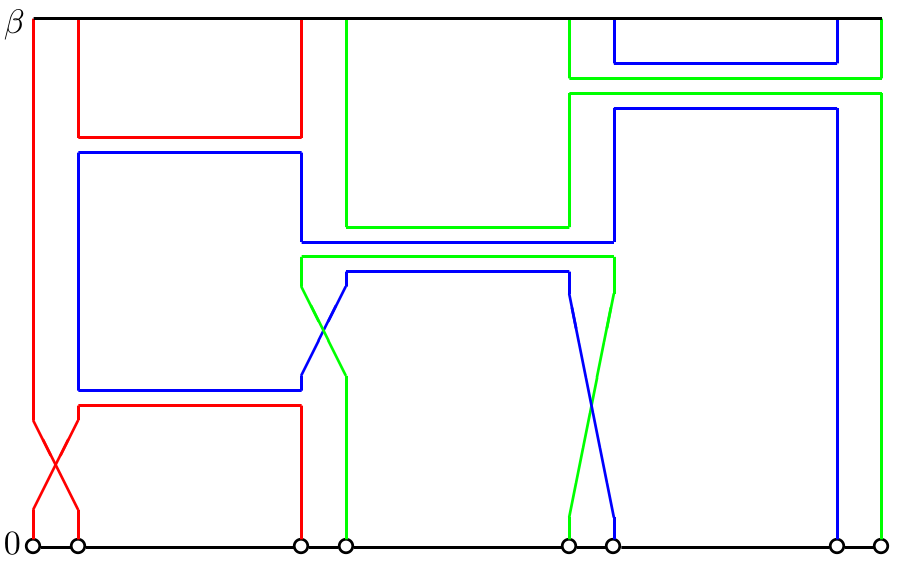}
\caption{A example realisation with loops coloured differently, here there are four sites in the underlying $\Lambda$ and for this realisation $|\mathcal{L}(\omega)|=3$.}
\label{loopexample}
\end{figure}
Note that the loops could be defined via a Poisson point process on $\tilde{\mathcal{E}}\times[0,\beta]$ where bars can occur between $x_i$ and $y_j$ with each $(i,j)$ being equally likely. However one would still need to introduce the crossing or parallel events so that it is still possible to have $x_0$ and $x_1$ in the same loop even when there is no bar occurring on any edge containing $x$.
 
For this loop model we have partition function
\begin{equation}
 Y_\theta^{J_1,J_2}(\beta,\Lambda)=\int\rho_{J_1,J_2}(\mathrm{d}\omega)\theta^{|\mathcal{L}(\omega)|}.
\end{equation}
Here $\theta>0$ is a parameter and $\rho_{J_1,J_2}$ is the probability measure corresponding to a Poisson point process of intensity $-2J_1$ for single bars and $J_2$ for double bars. The relevant probability measure is then
\begin{equation}
\frac{1}{Y_\theta^{J_1,J_2}(\beta,\Lambda)}\rho_{J_1,J_2}(\mathrm{d}\omega)\theta^{|\mathcal{L}(\omega)|}.
\end{equation}
We are interested in sets of realisations, $\omega$, where certain points of $\tilde{\Lambda}\times[0,\beta]$ are in the same loop. Probabilities of these events are connected to correlations in the spin-1 quantum system presented in Section \ref{sec spin1model}, they will be required in the proof of N\'eel order in Section \ref{sec neelorder}. Particular events of interest will be denoted pictorially , see Fig. \ref{connectiontypes}. These events are defined and denoted as follows.
\begin{description}
\item[a)] The event that sites $x_i$ and $y_j$ are connected (in the same loop). Note that the probability of $x_i$ and $y_j$ being connected is independent of $i$ and $j$. Denoted $E[\easyij\qquad\;\,]$.
\item[b)] The event that $x_0$ and $x_1$ are connected, $y_0$ and $y_1$ are connected but there is no connection from any $x_i$ to any $y_j$. Denoted $E\bigg[\easiest\qquad\bigg]$.
\item[c)] The event that $x_0$ and $y_0$ are connected, $x_1$ and $y_1$ are connected but $x_0$ and $x_1$ are not connected. Denoted $E\bigg[\hardest\qquad\bigg]$. We can also have $x_0$ and $y_1$ connected and $x_1$ and $y_0$ connected but $x_0$ and $x_1$ not connected and denote the event in the analogous way. These events both have the same probability.
\item[d)] The event that all four sites $x_0,x_1,y_0,y_1$ are connected. Denoted $E\bigg[\hard\qquad\bigg]$.
\end{description}
The definition of bars means that if a loop is followed starting from a point $x_i\in\tilde\Lambda$ (by moving in either the up or down direction) then the direction it is travelling upon arriving at a point $y_j\in\tilde\Lambda$ in the same loop is determined only by the number of bars the loop has encountered between the sites. For example on a bipartite lattice defined by sublattices $\Lambda_A$ and $\Lambda_B$ such that $\{x,y\}\in\mathcal{E} \iff x\in\Lambda_A,y\in\Lambda_B$ the direction that $x_i$ is left and $y_j$ is entered will be the same if $x$ and $y$ are in the same sublattice and different if they are in different sublattices.
\par
Sometimes the order in which sites are encountered along the loop will be important. In this case arrows will indicate the order that sites will be encountered in on following the loop (up to parameterisation). The events $E\bigg[\dirconno\qquad\bigg]$\,, $E\bigg[\dirconnt\qquad\bigg]$\, and $E\bigg[\dirconnth\qquad\bigg]$\, are the events that all four sites are connected and are encountered along the loop in the order indicated by the arrows. For example the first event,$E\bigg[\dirconno\qquad\bigg]$, means that upon leaving site $x_0$ if we encounter $y_0$ before encountering $x_1$ then we will then encounter $y_1$ and then $x_1$ before closing the loop. As this notation is potentially confusing (but also seemingly unavoidable) the reader will be told explicitly when the order is important. When wanting the probability of these events we will drop the $E$ from the notation, as below.
\par
It is intuitively clear that $\mathbb{P}(\easy\qquad\;\,)$ decays exponentially fast with respect to $\|x-y\|$ for $\beta$ small. Hence $\mathbb{P}\bigg(\hard\qquad\bigg)$ and $\mathbb{P}\bigg(\hardest\qquad\bigg)$ must also have exponential decay. $\mathbb{P}\bigg(\easiest\qquad\bigg)$ should depend weakly on $\|x-y\|$ for small enough $\beta$. For $\|x-y\|$ large enough the probability may approach $\mathbb{P}(\constconn\qquad\;\;)^2$, it is not clear how to prove or disprove such a relation at this time.
 
\begin{figure}[t!]
\includegraphics[width=12cm]{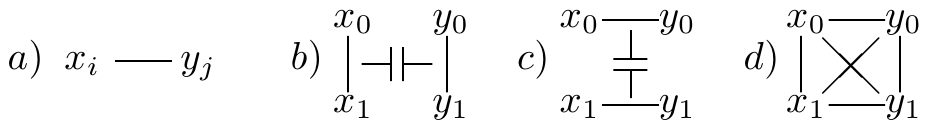}
\caption{Pictures representing the set of realisations where the pictured connections are present.
}
\label{connectiontypes}
\end{figure}
 
\section{Space-time spin configurations}
\label{sec spinconfigs}
In order to make the connection with spin systems we need the notion of a \emph{space-time spin configuration}. The spin system we shall connect to is the spin-1 Heisenberg model, we shall make this connection via an intertwining that merges two spin-$\frac{1}{2}$ models. For this reason we will take $\theta=2$ from Section \ref{sec loopmodel} ($2S+1$ for $S=\frac{1}{2}$). This is also the reason the lattice $\tilde{\Lambda}$ has two sites for every site in $\Lambda$. It is also possible to represent the spin-$S$ model for general $S$ by merging $2S$ spin-$\frac{1}{2}$ models, this will mean $\tilde{\Lambda}$ will have $2S$ sites for every site in $\Lambda$. See \cite{N} for more details. This generalisation together with some results analogous to the ones presented here should be straightforward once the spin-1 model is understood. It is not immediately clear which results will still hold however, investigation is required.
\par
From now on we take the cubic lattice in $\mathbb{Z}^d$ with side length L, denoted $\Lambda_L$, with periodic boundary conditions. The edge set, $\mathcal{E}_L$, will consist of pairs of nearest neighbour lattice points. Precisely
\begin{align}
\Lambda_L =&\left\{-\frac{L}{2}+1,\dots,\frac{L}{2}\right\}^d,
\\
\mathcal{E}_L=&\{\{x,y\}\subset\Lambda_L|\; \|x-y\|=1\ \text{ or } |x_i-y_i|=L-1 \text{ for some } i=1,...,d\}.
\end{align}
Where $\|x-y\|$ is the graph distance between $x$ and $y$.
A space-time spin configuration is a function
\begin{equation}
 \sigma:\tilde{\Lambda}\times[0,\beta]_{per}\to\left\{-\frac{1}{2},\frac{1}{2}\right\}.
\end{equation}
$\sigma_{x_i,t}$ is piecewise constant in $t$ for any $x_i$. We further define $\Sigma$ to be the set of all such functions with a finite number of discontinuities. For a realisation of the process $\omega$ we consider $\sigma$ that are constant on the vertical segments of each loop in $\mathcal{L}(\omega)$ and that change value on crossing a bar. This restriction on configurations will allow to make the link with spin systems. We call such configurations \emph{compatible} with $\omega$ and denote by $\Sigma^{(1)}(\omega)$ the set of all compatible configurations.
The following relation holds as we work on a bipartite lattice, meaning fixing a configuration's value at some $(x_i,t)$ determines the configuration on the entire loop containing $(x_i,t)$:
\begin{equation}
|\Sigma^{(1)}(\omega)|=2^{|\mathcal{L}(\omega)|},
\end{equation}
from which we can obtain
\begin{equation}
 Y_2^{J_1,J_2}(\beta,\Lambda)=\int\rho_{J_1,J_2}(\mathrm{d}\omega)\sum_{\sigma\in\Sigma^{(1)}(\omega)}1.
\end{equation}
\par
We further define the set $\Sigma^{(1)}_{x_i,y_j}(\omega)\supset \Sigma^{(1)}(\omega)$ to be compatible configurations along with configurations that flip spin at points $(x_i,0)$ or $(y_j,0)$ (or both) but are otherwise compatible. 
\par
In the remainder of this section it will be convenient to ignore the condition that compatible configurations flip value on crossing a bars as it adds unnecessary extra complication. Hence we further define $\Sigma^{(2)}(\omega)$ to be configurations that are constant on loops (and hence do not flip value at bars). $\Sigma^{(2)}_{x_i,y_j}(\omega)\supset \Sigma^{(2)}(\omega)$ denotes the set of configurations in $\Sigma^{(2)}(\omega)$ along with configurations that flip spin at points $(x_i,0)$ or $(y_j,0)$ (or both) but are otherwise consistent with the definition of $\Sigma^{(2)}(\omega)$. The reader should bear in mind in the sequel that the connection with the spin-1 quantum system requires spin flips at bars, with this in mind the modifications to the remainder of this section required to incorporate the spin flips are easy.
\par
As in \cite{U} we will later need a more general setting for the measure on space-time spin configurations. We consider a Poisson point process on $\tilde{\mathcal{E}}\times[0,\beta]$ with events being specifications of the local spin configuration. We will consider discontinuities involving two pairs of sites ($x_0,x_1,y_0,y_1$). The objects of the process will be a set of allowed configurations at these sites immediately before and after $t$. We can denote these events as 
\par

\begin{equation}
\qquad\qquad\qquad\qquad\qquad\qquad\qquad
\end{equation}

\setlength{\unitlength}{1.5cm}
\begin{figure}[h!]
\begin{picture}(6,1.2)(-3,-0.8)
\spinconfig{\mathsmaller{\sigma_{x_0,t+}}}{\mathsmaller{\sigma_{x_1,t+}}}{\mathsmaller{\sigma_{y_0,t+}}}{\mathsmaller{\sigma_{y_1,t+}}}{\mathsmaller{\sigma_{x_0,t-}}}{\mathsmaller{\sigma_{x_1,t-}}}{\mathsmaller{\sigma_{y_0,t-}}}{\mathsmaller{\sigma_{y_1,t-}}}{1.5}
\end{picture}
\vspace{-2cm}
\end{figure}

Implicit here is an ordering on $\Lambda$ with $x<y$. An event $A$ is a subset of $\{-1/2,1/2\}^8$ and occurs with intensity $\iota(A)$. More precisely we let $\iota:\mathcal{P}(\{-1/2,1/2\}^8)\to\mathbb{R}$ denote the intensities of the Poisson point process, denoted $\rho_\iota$. Given realisation, $\xi$, of $\rho_\iota$ let $\Sigma(\xi)$ be the set of configurations compatible with $\xi$ meaning that $\sigma\in\Sigma(\xi)$ if\\

\setlength{\unitlength}{1.5cm}
\begin{figure}[h!]
\hspace{-4.4cm}
\begin{picture}(5,1)(-3,0)
\spinconfig{\mathsmaller{\sigma_{x_0,t+}}}{\mathsmaller{\sigma_{x_1,t+}}}{\mathsmaller{\sigma_{y_0,t+}}}{\mathsmaller{\sigma_{y_1,t+}}}{\mathsmaller{\sigma_{x_0,t-}}}{\mathsmaller{\sigma_{x_1,t-}}}{\mathsmaller{\sigma_{y_0,t-}}}{\mathsmaller{\sigma_{y_1,t-}}}{1.5}
\end{picture}
\vspace{-1.2cm}
\end{figure}
\vspace{-1.67cm}
\[
\hspace{2.5cm}
 \in A \text{ whenever } \xi \text{ contains the event } A \text{ at point } (x_0,x_1,y_0,y_1,t),
\]
and $\sigma_{x_i,t}$ is otherwise constant in $t$. The measure is then given by $\rho_\iota$ with the counting measure on compatible configurations. We note that different intensities can give the same measure as in \cite{U}, for $\iota$ and $\iota'$ intensities it is shown in \cite{U} that
\begin{equation}
\int\rho_\iota(\mathrm{d}\xi)\int\rho_{\iota'}(\mathrm{d}\xi')\sum_{\sigma\in\Sigma(\xi\cup\xi')}F(\sigma)=\int\rho_{\iota+\iota'}(\mathrm{d}\xi)\sum_{\sigma\in\Sigma(\xi)}F(\sigma).
\end{equation}

We want to write the Poisson point process involving bars in terms of intensities of specifications of spins. We require that specifications corresponding to single and double sets of bars have intensity $-2J_1$ and $J_2$ respectively. If we naively define $\tilde{\iota}$ by

\begin{picture}(0,0)(-1.75,1)
\spinconfig{a'}{a}{a}{b}{a'}{c}{c}{b}{1}
\end{picture}
\begin{picture}(0,0)(-4.75,1)
\spinconfig{a'}{a}{a}{a'}{c'}{c}{c}{c'}{1}
\end{picture}
\begin{equation}\label{naiveintensities}
\tilde{\iota}\,\Bigg(\Bigg\{\qquad\qquad\quad\Bigg\}\Bigg)=-2J_1,\qquad\tilde{\iota}\,\Bigg(\Bigg\{\qquad\qquad\quad \Bigg\}\Bigg)=J_2.
\end{equation}
For any $a,a',b,c,c'\in\{1/2,-1/2\}$, where the first event corresponds to single bars and the second event to double bars. We see there is an overlap on the specification\\
\begin{picture}(1,0.7)(-3.5,0.4)
\spinconfig{b}{a}{a}{b}{b}{c}{c}{b}{1}
\end{picture}

so this assignment of intensities of specifications cannot be correct. Simply removing the overlapping case from one of the specifications will result in events not having the required intensities. This suggests we should instead define $\iota$ by
\begin{picture}(0,0)(1.35,1.3)
\spinconfig{a'}{a}{a}{b}{a'}{c}{c}{b}{1}
\end{picture}
\begin{picture}(0,0)(4.55,1.3)
\spinconfig{a'}{a}{a}{a'}{c'}{c}{c}{c'}{1}
\end{picture}
\begin{equation}\label{intensities}
\iota\,\Bigg(\Bigg\{\qquad\qquad\quad\Bigg\}_{a'\neq b}\Bigg)=-2J_1,\quad \iota\,\Bigg(\Bigg\{\qquad\qquad\quad \Bigg\}_{a'\neq c'}\Bigg)=J_2,
\end{equation}
\begin{picture}(0,0)(-3,1.05)
\spinconfig{b}{a}{a}{b}{b}{c}{c}{b}{1}
\end{picture}
\[
\iota\,\Bigg(\Bigg\{\qquad\qquad\quad \Bigg\}\Bigg)=J_2-2J_1.
\]

\hspace{-0.13cm}
For any $a,a',b,c,c'\in\{1/2,-1/2\}$.
Now each specification is disjoint from the other two and single and double sets of bars have intensities $-2J_1$ and $J_2$ respectively, as required. We also have $\iota(A)=0$ for any other specification. Then
\begin{equation}
Y_2^{J_1,J_2}(\beta,\Lambda)=\int\rho_\iota(\mathrm{d}\xi)\sum_{\sigma\in\Sigma(\xi)}1.
\end{equation}
This representation can be used to show reflection positivity of the loop model, however in light of the famous paper of Dyson, Lieb and Simon \cite{D-L-S} showing reflection positivity for $J_1<0=J_2$ and the recent paper \cite{L} showing reflection positivity for $J_1=0<J_2$ we have reflection positivity for $J_1<0<J_2$ more directly, as we shall see.
 
\section{The general spin-1 SU(2) invariant Heisenberg model}
\label{sec spin1model}
Let $S\in\frac{1}{2}\mathbb{N}$. A spin-$S$ model has local Hilbert spaces $\mathcal{H}_x=\mathbb{C}^{2S+1}$. Observables are Hermitian matrices built from linear combinations of tensor products of operators on $\otimes_{x\in\Lambda}\mathcal{H}_x$. Physically important observables are expressed in terms of \emph{spin matrices} $S^1,S^2$ and $S^3$, operators on $\mathbb{C}^{2S+1}$ that are the generators of a (2$S$+1)-dimensional irreducible unitary representation of $\mathfrak{su}(2)$. They satisfy the commutation relations
\begin{equation}\label{spins}
 \left[S^\alpha,S^\beta\right]=i\sum_\gamma\mathcal{E}_{\alpha\beta\gamma}S^\gamma,
\end{equation}
where $\alpha,\beta,\gamma\in\{1,2,3\}$ and $\mathcal{E}_{\alpha\beta\gamma}$ is the Levi-Civita symbol. Denote $\mathbf{S}=(S^1,S^2,S^3)$, then $\mathbf{S}\cdot\mathbf{S}=S(S+1)\mathbbm{1}$. The case $S=\frac{1}{2}$ gives the Pauli spin matrices. For $S=1$ there are several choices for spin matrices, we will use the following matrices:
\begin{equation}\label{S1matrices}
S^1=\frac{1}{\sqrt{2}}\left(\begin{matrix} 0 & 1 & 0 \\  1 & 0 & 1 \\ 0 & 1 & 0  \end{matrix}\right),\quad S^2=\frac{1}{\sqrt{2}}\left(\begin{matrix} 0 & -i & 0 \\  i & 0 & -i \\  0 & i & 0  \end{matrix}\right),\quad S^3=\left(\begin{matrix} 1 & 0 & 0 \\ 0 & 0 & 0 \\ 0 & 0 & -1 \end{matrix}\right). 
\end{equation}

Consider a pair $(\Lambda,\mathcal{E})$ of a lattice, $\Lambda\subset\mathbb{Z}^d$, and a set of edges, $\mathcal{E}$, between points in $\Lambda$. We will take $\Lambda$ to be a box in $\mathbb{Z}^d$, hence $\Lambda$ is bipartite. We denote by $\Lambda_A$ and $\Lambda_B$ the two disjoint lattices such that $\Lambda_A\cup\Lambda_B=\Lambda$ and every $e\in\mathcal{E}$ contains precisely one site from $\Lambda_A$ and one site from $\Lambda_B$.

Then we take the operator $S_x^i$ for $i=1,2,3$ to be shorthand for the operator $S^i_x\otimes Id_{\Lambda\setminus\{x\}}$. Recall the definition of $\tilde{\Lambda}$ and $\tilde{\mathcal{E}}$ above, we shall use these below.
\par
The most general SU(2) invariant Hamiltonian with two-body interactions for spin-1 is
\begin{equation}\label{ham}
 H^{J_1,J_2}_{\Lambda}=-\sum_{\{x,y\}\in\mathcal{E}}\left(J_1\left(\mathbf{S}_x\cdot\mathbf{S}_y\right)+J_2\left(\mathbf{S}_x\cdot\mathbf{S}_y\right)^2\right).
\end{equation}
We will soon drop the parameters $J_1,J_2$ from $ H^{J_1,J_2}_{\Lambda}$ for readability. In this article we will be concerned with the region where $J_1\leq0\leq J_2$. Associated to this Hamiltonian we have the following partition function and Gibbs states for $\beta>0$:
\begin{align}
 Z_{\Lambda,\beta}^{J_1,J_2}=&Tr e^{-\beta H^{J_1,J_2}_{\Lambda}},
 \\
 \langle\cdot\rangle_{\Lambda,\beta}^{J_1,J_2}=&\frac{1}{Z_{\Lambda,\beta}^{J_1,J_2}}Tr\cdot e^{-\beta H^{J_1,J_2}_{\Lambda}}.
\end{align}
Again we shall drop the parameters $J_1,J_2$ from the notation.
\par
The following new definitions come from Nachtergaele \cite{N}. We introduce an isometry $V:\mathbb{C}^3\to\mathbb{C}^2\otimes\mathbb{C}^2$ with the property $VD^1(g)=(D^{\frac{1}{2}}(g))^{\otimes 2}V$ for $g\in SU(2)$ and $D^S$ the spin-$S$ representation of SU(2). Here the representation $D^1$ is given by the matrices \eqref{S1matrices} and $D^{\frac{1}{2}}$ is given by the Pauli matrices. it is clear such an isometry exists as we can define it for spin matrices and then extend by linearity (recall that the spin matrices generate the representation). From this we obtain the key relation
\begin{equation}
 VS^i=(\sigma^i\otimes\mathbbm{1}+\mathbbm{1}\otimes\sigma^i)V,
\end{equation}
where $\sigma^i$ are the spin-$\frac{1}{2}$ matrices (hence $2\sigma^i$ are the Pauli matrices). Further we have
\begin{equation}
 V^*V=\mathbbm{1} \text{ and } VV^*=P,
\end{equation}
where $P$ is the projection onto the spin triplet. Hence $VS^i$ acts on $\mathbb{C}^2\otimes\mathbb{C}^2$ and so using the notation before $V_xS_x^i$ acts on $\otimes_{x\in\Lambda}\mathbb{C}^2\otimes\mathbb{C}^2$. We make the following definition
\begin{equation}\label{R}
 R^i:=VS^iV^*.
\end{equation}
One can check that $R^i=(\sigma^i\otimes\mathbbm{1}+\mathbbm{1}\otimes\sigma^i)$.
To make expressions more concise we will also denote $A_X:=\otimes_{x\in X}A_x$ for $X\subset\Lambda$. For these new operators we have a new Hamiltonian  (note we have now dropped the $J_1$ and $J_2$ parameters)
\begin{equation}
 \tilde{H}^{(1)}_{\tilde{\Lambda}}=-\sum_{\{x,y\}\in\mathcal{E}}\left(J_1\left(\mathbf{R}_x\cdot\mathbf{R}_y\right)+J_2\left(\mathbf{R}_x\cdot\mathbf{R}_y\right)^2\right),
 \end{equation}
 and associated Gibbs states
 \begin{align}
 Z_{\tilde{\Lambda},\beta}^{(1)}=&TrP_\Lambda e^{-\beta  \tilde{H}^{(1)}_{\tilde{\Lambda}}},
 \\
 \langle\cdot\rangle_{\tilde\Lambda,\beta}^{(1)}=&\frac{1}{Z_{\tilde{\Lambda},\beta}^\sim} Tr\cdot P_{\Lambda}e^{-\beta  \tilde{H}^{(1)}_{\tilde{\Lambda}}}.
\end{align}
The connection with the previous Gibbs state can easily be made explicit,
\begin{equation}\label{newgibbs}
 \langle A \rangle_{\Lambda,\beta}=\langle V_\Lambda A V_\Lambda^*\rangle^{(1)}_{\tilde{\Lambda},\beta}.
\end{equation}
We use Dirac notation in the following way: $|a,b\rangle$ denotes an element of the one site Hilbert space $\mathbb{C}^2\otimes\mathbb{C}^2$ and $|a,b\rangle\otimes|c,d\rangle$ for two sites etc.

There are two operators of particular interest, both act on two sites. Firstly we define $\mathcal{S}^{(1)'}$ by its matrix elements
\begin{equation}\label{calS1}
 \langle a',b'|\otimes\langle c',d'|\mathcal{S}^{(1)'}|a,b\rangle\otimes|c,d\rangle=(-1)^{b-b'}\delta_{a,a'}\delta_{d,d'}\delta_{b,-c}\delta_{b',-c'}.
\end{equation}
Geometrically this requires spin $b$ and $c$ and the spins $b'$ and $c'$ to be the negative of each other and also requires $a=a'$ and $d=d'$. This corresponds to the the single bars in the loop picture. The second operator, $\mathcal{D}^{(1)'}$, is also defined via its matrix elements
\begin{equation}\label{calD1}
 \langle a',b'|\otimes\langle c',d'|\mathcal{D}^{(1)'}|a,b\rangle\otimes|c,d\rangle=(-1)^{a-a'}(-1)^{b-b'}\delta_{a,-d}\delta_{b,-c}\delta_{a',-d'}\delta_{b',-c'}.
\end{equation}
The geometrical interpretation this time is that of the double bars. The actual operators needed are $\mathcal{S}^{(1)}=P\mathcal{S}^{(1)'}P$ and $\mathcal{D}^{(1)}=P\mathcal{D}^{(1)'}P$ in order to account for bars occurring between any $x_i$ and $y_j$ with each $i$ and $j$ from $\{0,1\}$ being equally likely. Note here that from this definition we see that we require the spin value to change sign on crossing a bar as was mentioned in Section \ref{sec spinconfigs}. There are also extra factors in $\mathcal{S}^{(1)}$ and $\mathcal{D}^{(1)}$ of $e^{i\pi a}$ for the bottom half of a bar (denoted $\sqcap$) and $e^{-i\pi a}$ for the top half of a bar (denoted $\sqcup$) where $a=\pm\frac{1}{2}$ is the spin value on the  site in $\Lambda_A$ associated to the bar. By direct computation of the matrix elements we can prove the relations
\begin{align}
 \mathcal{S}^{(1)}_{x,y}=-\frac{1}{2}\mathbf{R}_x\cdot\mathbf{R}_y+\frac{1}{2}P_{x,y},
 \\
 \mathcal{D}^{(1)}_{x,y}=\left(\mathbf{R}_x\cdot\mathbf{R}_y\right)^2-P_{x,y}.
\end{align}
Using these relations we can rewrite the Hamiltonian in the region $J_1\leq0\leq J_2$ as
\begin{equation}\label{newham}
 \tilde{H}^{(1)}_{\tilde{\Lambda}}=-\sum_{\{x,y\}\in\mathcal{E}}\left(-2J_1 \mathcal{S}^{(1)}_{x,y}+J_2 \mathcal{D}^{(1)}_{x,y}+(J_1+J_2)P_{x,y}\right).
\end{equation}

\section{The random loop representation}
\label{sec looprep}
We can neglect the term $(J_1+J_2)P_{x,y}$ in the Hamiltonian \eqref{newham} and instead add $(2J_1-J_2)\mathbbm{1}$, this does not change the Gibbs states. Doing this allows to use a useful lemma from \cite{A-N}
\begin{equation}
 \exp\left\{-\sum_{\{x,y\}\in\mathcal{E}}\left(uA_{x,y}+\nu B_{x,y}-u-\nu\right)\right\}=\int\rho(\mathrm{d}\omega)\sideset{}{^*}\prod_{(x,y)\in\omega}C_{x,y}.
\end{equation}
Here $\rho$ is the measure associated to a Poisson point process on $\mathcal{E}\times[0,1]$ with two events occurring with intensities $u$ and $\nu$ respectively. The product is ordered according to the times at which the events occur. $C$ is either $A$ or $B$ depending on which event occurs. This is actually a slight extension of the lemma presented in \cite{A-N}. From this we can obtain
\begin{equation}\label{randlooprep}
 \exp\left\{-\sum_{\{x,y\}\in\mathcal{E}}\left(-2J_1\mathcal{S}^{(1)}_{x,y}+J_2\mathcal{D}^{(1)}_{x,y}+2J_1-J_2\right)\right\}=\int\rho(\mathrm{d}\omega)\sideset{}{^*}\prod_{(x_i,y_j)\in\omega}A^{(1)}_{x_i,y_j}
\end{equation}
here each $A^{(1)}$ is one of $\mathcal{S}^{(1)}$ or $\mathcal{D}^{(1)}$. The process has intensity $-2J_1$ for single bars and $J_2$ for double bars. Again the product is ordered by the time events occur.
\par
We now prove the connection between the loop model and the quantum system. This will enable us to understand certain important correlation functions. After this we should have the tools we need to calculate any two point correlation (at least ones involving only spin operators). The first thing to understand is the extra factor, which we shall denote by $z_{x_i,y_j}(\sigma,\omega)$, the product of all factors $e^{\pm i \pi a}$ from operators $\mathcal{S}^{(1)}$ and $\mathcal{D}^{(1)}$ corresponding to the bars in loop(s) containing $x_i$ and $y_j$ in a realisation $\omega$ of $\rho_{J_1,J_2}$. Again $a\in\{1/2,-1/2\}$ is the value that $\sigma$ assigns to the portions of these loop(s) in the $\Lambda_A$ sublattice (or if all of a loop is on the sublattice $\Lambda_B$ $a$ is given by the negative of the value assigned to the loop). The value of $z_{x_i,y_j}(\sigma,\omega)$ is specified by the following lemma:
\\
\begin{factors}\label{zfactor}
For $\Lambda$ bipartite we have for all $i,j$
\begin{equation}
z_{x_i,y_j}(\sigma,\omega)= \left\{
  \begin{array}{c l}
    1 & \text{ if } \sigma\in \Sigma^{(1)}(\omega)\\
    (-1)^{\|x-y\|} & \text{ if } \sigma\in\Sigma^{(1)}_{x_i,y_j}(\omega)\setminus\Sigma^{(1)}(\omega) \text{ and } \omega\in E[\easyij\qquad\;\,]. \\
  \end{array}
\right.
\end{equation}
\end{factors}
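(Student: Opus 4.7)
The plan is to reduce the computation of $z_{x_i,y_j}(\sigma,\omega)$ to a local per-bar analysis combined with a global parity count along each loop. Each bar in $\omega$ contributes two extra factors: $e^{i\pi a_\sqcap}$ from the bottom half and $e^{-i\pi a_\sqcup}$ from the top half, where $a_\sqcap,a_\sqcup\in\{\pm 1/2\}$ are the values of $\sigma$ on the $\Lambda_A$-side of the bar just below and just above, respectively. The per-bar contribution is therefore $e^{i\pi(a_\sqcap-a_\sqcup)}$, which equals $+1$ if $a_\sqcap=a_\sqcup$ and $-1$ if $a_\sqcap=-a_\sqcup$.

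For $\sigma\in\Sigma^{(1)}(\omega)$, compatibility forces $a_\sqcup=-a_\sqcap$ at every bar, so every bar contributes $-1$ and the product of the extra factors over a loop with $n$ bars is $(-1)^n$. I would then argue that $n$ must be even: each bar jumps between $\Lambda_A$ and $\Lambda_B$, whereas the parallel/crossing events on vertical segments preserve the site and hence the sublattice, so a closed loop must traverse an even number of sublattice swaps. This yields $z_{x_i,y_j}(\sigma,\omega)=1$.

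For $\sigma\in\Sigma^{(1)}_{x_i,y_j}(\omega)\setminus\Sigma^{(1)}(\omega)$ under the assumption that $x_i$ and $y_j$ lie on a common loop $\ell$, introducing a spin flip at $(x_i,0)$ and/or $(y_j,0)$ amounts to negating $\sigma$ on exactly one of the two arcs $\ell_1,\ell_2$ into which the cuts at $(x_i,0)$ and $(y_j,0)$ divide $\ell$; call $\ell_1$ the flipped arc. I would classify each bar of $\ell$ as \emph{interior} if both of its halves belong to the same arc, and \emph{straddling} otherwise. On an interior bar the relation $a_\sqcap=-a_\sqcup$ survives (both halves flip, or neither does), so the bar still contributes $-1$; on a straddling bar exactly one of $a_\sqcap,a_\sqcup$ is negated, so $a_\sqcap=a_\sqcup$ and the bar now contributes $+1$. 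Writing $N_{\text{int}}$ and $N_s$ for the numbers of interior and straddling bars of $\ell$, this yields $z_{x_i,y_j}(\sigma,\omega)=(-1)^{N_{\text{int}}}$.

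The final step identifies $N_{\text{int}}\pmod 2$ by bipartiteness. The total number of bars $N_{\text{int}}+N_s$ is even by the argument of the second paragraph, so $N_{\text{int}}\equiv N_s\pmod 2$. Counting bar-halves on $\ell_1$ in two ways gives $B_1=2N_{\text{int},1}+N_s$, where $N_{\text{int},1}$ denotes the number of interior bars whose two halves both lie on $\ell_1$; hence $N_s\equiv B_1\pmod 2$. But $B_1$ is the number of bar traversals on the arc $\ell_1$ from $(x_i,0)$ to $(y_j,0)$, and by bipartiteness this has the same parity as $\|x-y\|$. This proves $z_{x_i,y_j}(\sigma,\omega)=(-1)^{\|x-y\|}$. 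The main obstacle I anticipate is the uniform treatment of double bars (which at one time carry two simultaneous bars and hence two independent pairs of extra factors) and of the parallel/crossing choices on vertical segments; a direct case-by-case inspection using \eqref{calS1} and \eqref{calD1} should confirm that the per-bar dichotomy and the bipartite parity count above remain valid.
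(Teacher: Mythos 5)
Your reduction rests on the claim that, for $\sigma\in\Sigma^{(1)}(\omega)$, ``compatibility forces $a_\sqcup=-a_\sqcap$ at every bar,'' and this is where the argument breaks. Compatibility forces the spin to change when the \emph{loop} crosses the bar, i.e.\ it relates the two endpoints of the same passage at time $t$: if the bar sits on the edge $\{u,v\}$ with $u\in\Lambda_A$, then $\sigma_{u,t-}=-\sigma_{v,t-}$ (lower passage) and $\sigma_{u,t+}=-\sigma_{v,t+}$ (upper passage). It imposes \emph{no} relation between $\sigma_{u,t-}$ and $\sigma_{u,t+}$, because the lower and upper halves of a bar are distinct loop passages, possibly belonging to different loops. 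In fact, if both halves lie on the same loop, a compatible $\sigma$ gives both passages the same $\Lambda_A$-value, so your per-bar factor is $e^{i\pi(a_\sqcap-a_\sqcup)}=+1$, not $-1$; and if they lie on different loops the two values are independent. A concrete counterexample: a single edge $\{x,y\}$ carrying two single bars and no crossing events produces two loops, each using one half of each bar; giving both loops the same $\Lambda_A$-value makes every per-bar factor $+1$, while your computation would return $(-1)^2$ only by accident of parity and fails as soon as the numbers do not conspire. So the evaluation ``every bar contributes $-1$, hence $z=(-1)^{\#\mathrm{bars}}$'' is not a proof, even though the final value $1$ is correct.

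The same issue propagates into your second case. The interior/straddling dichotomy tacitly assumes that both halves of every bar met by the loop $\ell$ lie on $\ell$; but a bar traversed by $\ell$ may have its other half on a different loop, and such ``external'' halves both invalidate the counting identity $B_1=2N_{\mathrm{int},1}+N_s$ (and the claim that $N_{\mathrm{int}}+N_s$ is even) and, under your whole-bar bookkeeping, would make the product depend on the spins of loops other than the one through $x_i,y_j$ --- which would contradict the very statement of the lemma. The quantity $z_{x_i,y_j}$ has to be read, as in the paper, as a product over the bar \emph{halves} traversed by the loop(s) through $x_i$ and $y_j$, each half contributing $e^{i\pi a}$ (type $\sqcap$) or $e^{-i\pi a}$ (type $\sqcup$) with $a$ the local $\Lambda_A$-value of that passage. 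The paper's proof then runs: along a closed loop the traversed halves alternate $\sqcap,\sqcup$, hence occur in equal numbers with a common $a$, giving $1$ in the compatible case; in the flipped case one cuts the loop at the two time-zero points into two arcs (on which the $\Lambda_A$-values are $a$ and $-a$), matches halves across and within the arcs, and the unmatched pair yields $(\pm i)^2=-1$ exactly when $\|x-y\|$ is odd. Your closing parity observation --- that the number of bar traversals along an arc from $(x_i,0)$ to $(y_j,0)$ has the parity of $\|x-y\|$ by bipartiteness --- is correct and is the same parity input the paper uses, but it must be applied to half-traversals along the loop rather than to whole bars, and the per-bar $\pm1$ dichotomy it is meant to feed does not survive.
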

Before the proof we should note that the lemma says that the only dependence on $\sigma$ is at $x_i$ and $y_j$ at time zero. If the spin does not flip at both sites that we get total factor $1$, else it depends on which sublattices the sites are in. If the spin only flips at one site then there are no compatible configurations hence the value of the total extra factor is unimportant. 

\begin{proof}
To begin note that we can take $(i,j)=(0,0)$. The result for $(i,j)\neq(0,0)$ follows as the choice of $i$ or $j$ does not affect which sublattice the two sites are in. Suppose $\sigma\in \Sigma^{(1)}(\omega)$. Moving upwards from $x_0$ the first bar encountered is $\sqcap$, the bars encountered then alternate between $\sqcup$ and $\sqcap$. Moving downwards from $x_0$ we first encounter a bar $\sqcup$ then alternate between $\sqcap$ and $\sqcup$. This means we can make a matching between bars of the form $\sqcap$ and bars of the form $\sqcup$. Because there are no spin flips at time zero all the bars $\sqcap$ have factors $e^{i\pi a}$ and all the bars $\sqcup$ have factor $e^{-i\pi a}$ where $a$ is the spin value $\sigma$ gives to $x_0$ at time zero. Hence we have full cancellation and are left with factor $1$. If there were a spin flip then bars between $x_0$ at time $0-$ and $y_0$ at time $0\pm$ would have factors $e^{i\pi(-a)}$ and $e^{-i\pi(-a)}$ for $\sqcap$ and $\sqcup$ respectively.
\par
If $\sigma\in\Sigma^{(1)}_{x_0,y_0}(\omega)\setminus\Sigma^{(1)}(\omega)$ and  $(-1)^{\|x-y\|}=1$ and $\omega\in E[\easy\qquad\;\,]$, then $x_0$ and $y_0$ are in the same sublattice. We can thus deduce that the section of loop that moves upwards/downwards from $x_0$ crosses an even number of bars before reaching $y_0$. This means that the loop containing $x_0$ and $y_0$ contains an even number of bars of each type ($\sqcap$ or $\sqcup$). Hence we can make a matching of a bar $\sqcup$ in one `half' of the loop with a bar $\sqcup$ in the other `half' and the same with bars $\sqcap$, with some bars left over. The factors from bars in the matching will thus be $1$ as the spin flip at $x_0$ at time $0$ means one bar in each pair has factor $e^{\pm i \pi a}$ and one bar has factor $e^{\pm i \pi (-a)}$. Here by `half' of a loop we mean the section that connects $x_0$ at time $0+$ with $y_0$ at time $0-$ or $x_0$ at time $0-$ with $y_0$ at time $0+$. There are still possibly some bars left over as each half of the loop may have a different 
number of bars in it. A moments thought reveals that there must be an even number of bars left, half of type $\sqcap$ and half of type $\sqcup$. As the bars $\sqcap$ have factor $e^{-i\pi (\pm a)}$ and the bars $\sqcup$ have factor $e^{i\pi(\pm a)}$ we have full cancellation again and have total factor $1$.
\par
For the remaining case $\sigma\in\Sigma^{(1)}_{x_0,y_0}(\omega)\setminus\Sigma^{(1)}(\omega)$ and  $(-1)^{\|x-y\|}=-1$ and $\omega\in E[\easy\qquad\;\,]$, we have $x_0$ and $y_0$ in different sublattices. We can see as last time that the factors from the `extra bars' (that arise from each half of the loop having a different number of bars) will cancel as again there are equal numbers of $\sqcap$ and $\sqcup$. For the remaining bars there are an odd number in each half of the loop, this means we can make a matching for all but two of the bars. The factors from bars in the matching will cancel each other. For the remaining two bars one is a $\sqcap$ with factor $e^{i\pi(\pm a)}$ and one is a $\sqcup$ with factor $e^{-i\pi(\mp a)}$ (the sign of $a$ is opposite due to the spin flip at $x_0$ at time $0$). This means the overall factor is $(\pm i)^2=-1$. This completes the proof.
\end{proof}

 In light of Proposition \ref{zfactor} the following proposition can be proved in the same way as Theorem 3.2 in \cite{U}.
\\
\begin{partition}\label{loopspin prop} The partition functions $Z_{\tilde{\Lambda},\beta}^{(1)}$ are given by
 \begin{equation}
   Z_{\tilde{\Lambda},\beta}^{(1)}=\int\rho(\mathrm{d}\omega)\sum_{\Sigma^{(1)}(\omega)}\prod_{\{x_i,y_j\}\in\mathcal{E}}z_{x_i,y_j}(\sigma,\omega)
   =\int\rho(\mathrm{d}\omega)2^{|\mathcal{L}(\omega)|}=Y_2^{J_1,J_2}(\beta,\Lambda).
\end{equation}
We also have the following identity
\begin{equation}
 Tr (\sigma^3\otimes\mathbbm{1})_x(\sigma^3\otimes\mathbbm{1})_ye^{-\beta \tilde{H}^{(1)}_{\tilde{\Lambda}}}=\int\rho(\mathrm{d}\omega)\sum_{\Sigma^{(1)}(\omega)}\left(\prod_{\{x_i,y_j\}\in\mathcal{E}}z_{x_i,y_j}(\sigma,\omega)\right)\sigma_{x_0}\sigma_{y_0},
\end{equation}
where $\sigma_{z_i}$ is the value of a space time configuration, $\sigma$, at time $0$ and site $z_i$.
\end{partition}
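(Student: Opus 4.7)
The plan is to expand $e^{-\beta \tilde{H}^{(1)}_{\tilde{\Lambda}}}$ using the Aizenman--Nachtergaele-style identity \eqref{randlooprep}, take the trace in the natural spin-$\tfrac{1}{2}$ product basis on $\otimes_{x\in\Lambda}(\mathbb{C}^2\otimes\mathbb{C}^2)$, and identify the surviving matrix elements with elements of $\Sigma^{(1)}(\omega)$. More precisely, the scalar shift $(2J_1-J_2)\mathbbm{1}$ absorbed into the exponential is cancelled by the intensity normalisation of the Poisson process, so that $e^{-\beta \tilde{H}^{(1)}_{\tilde{\Lambda}}}$ is written as $\int\rho_{J_1,J_2}(d\omega)\,\sideset{}{^*}\prod A^{(1)}$, a time-ordered product of $\mathcal{S}^{(1)}_{x,y}$ and $\mathcal{D}^{(1)}_{x,y}$ factors, one for each event of $\omega$. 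Inserting a resolution of the identity between every consecutive pair of events reduces $\mathrm{Tr}\, P_{\Lambda} e^{-\beta \tilde{H}^{(1)}_{\tilde{\Lambda}}}$ to a sum over piecewise-constant functions $\sigma:\tilde{\Lambda}\times[0,\beta]_{\text{per}}\to\{\pm\tfrac12\}$.

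By the matrix-element formulas \eqref{calS1} and \eqref{calD1}, each bar in $\omega$ produces a non-zero matrix element exactly when the spins on the sites involved flip sign across the bar while the other two sites' spins are unchanged (for $\mathcal{S}^{(1)}$) or both pairs of spins flip (for $\mathcal{D}^{(1)}$). This is exactly the compatibility condition defining $\Sigma^{(1)}(\omega)$: configurations that are constant along vertical segments of loops and change sign at each bar. Each non-zero matrix element carries a sign $e^{\pm i\pi a}$ determined by the $\Lambda_A$-value $a$ adjacent to the bar, with the sign depending on whether it is the top half ($\sqcup$) or bottom half ($\sqcap$) of the bar. Collecting these signs over all bars in $\omega$ assembles into the product $\prod_{\{x_i,y_j\}\in\mathcal{E}} z_{x_i,y_j}(\sigma,\omega)$, matching exactly the definition of the sign factor. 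Applying Lemma \ref{zfactor}, this product equals $1$ for every $\sigma\in\Sigma^{(1)}(\omega)$, so the integrand collapses to $|\Sigma^{(1)}(\omega)|=2^{|\mathcal{L}(\omega)|}$, which by definition gives $Y_2^{J_1,J_2}(\beta,\Lambda)$. For the second identity, the operators $(\sigma^3\otimes\mathbbm{1})_x$ and $(\sigma^3\otimes\mathbbm{1})_y$ are diagonal in the chosen basis, so inserting them at time $0$ merely introduces the scalar factor $\sigma_{x_0}\sigma_{y_0}$ without altering the set of compatible configurations or the sign analysis; the same expansion therefore yields the stated formula.

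The substantive part of the argument is the sign bookkeeping — verifying that the $e^{\pm i\pi a}$ phases coming out of \eqref{calS1} and \eqref{calD1} across successive bars assemble into exactly $z_{x_i,y_j}(\sigma,\omega)$ as defined in Section \ref{sec spin1model} and handled in Lemma \ref{zfactor}. Once that bookkeeping is granted, the argument is formally identical to the proof of Theorem 3.2 in \cite{U}; the only new ingredient here is that the $\mathcal{S}^{(1)}$ and $\mathcal{D}^{(1)}$ operators carry the explicit $(-1)^{b-b'}$ and $(-1)^{a-a'}(-1)^{b-b'}$ signs encoding the double-cover $\mathrm{SU}(2)\to\mathrm{SO}(3)$, and Lemma \ref{zfactor} precisely certifies that on loops these signs cancel.
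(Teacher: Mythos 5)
Your proposal is correct and follows essentially the same route as the paper, whose proof consists precisely of the observation that, given Lemma \ref{zfactor}, one argues as in Theorem 3.2 of \cite{U}: expand $e^{-\beta \tilde{H}^{(1)}_{\tilde{\Lambda}}}$ via \eqref{randlooprep}, take the trace in the product spin-$\frac{1}{2}$ basis so that the surviving matrix elements of the time-ordered $\mathcal{S}^{(1)},\mathcal{D}^{(1)}$ factors are exactly the configurations in $\Sigma^{(1)}(\omega)$, and use Lemma \ref{zfactor} to show the accumulated phases equal $1$, collapsing the sum to $2^{|\mathcal{L}(\omega)|}=$ the integrand of $Y_2^{J_1,J_2}(\beta,\Lambda)$. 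Your handling of the second identity (the diagonal insertions $(\sigma^3\otimes\mathbbm{1})_x,(\sigma^3\otimes\mathbbm{1})_y$ at time $0$ contributing the scalar $\sigma_{x_0}\sigma_{y_0}$ without changing the compatibility or sign analysis) also matches the paper's intended argument.
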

 
With the important details understood we can calculate some correlations in terms of probabilities in the loop model. The most important correlations here are the N\'eel and nematic correlations (Proposition \ref{spincorrprop} (a) and (b) respectively).
\\
\begin{correlations}\label{spincorrprop} For $i,j=1,2,3$, $x\neq y$, $i\neq j$ and $\Lambda$ bipartite
\begin{description}
\item[(a)] $\langle S_x^iS_y^i\rangle_{\Lambda,\beta}=(-1)^{\|x-y\|}\mathbb{P}(\easy\qquad\;\,)$,
\item[(b)] $\langle (S_x^i)^2(S_y^i)^2\rangle_{\Lambda,\beta}-\langle (S_x^i)^2\rangle_{\Lambda,\beta}\langle (S_y^i)^2\rangle_{\Lambda,\beta}=-\frac{1}{36}+\frac{1}{4}\mathbb{P}\bigg(\easiest\qquad\bigg)+\frac{1}{2}\mathbb{P}\bigg(\hardest\qquad\bigg)+\frac{1}{4}\mathbb{P}\bigg(\hard\qquad\bigg)$,
\item[(c)] $\langle S_x^iS_x^jS_y^iS_y^j\rangle_{\Lambda,\beta}=\frac{1}{4}\left[-(-1)^{\|x-y\|}\mathbb{P}(\easy\qquad\;\,)+\mathbb{P}\bigg(\hardest\qquad\bigg)\right]$,
\item[(d)] $\langle S_x^iS_x^jS_y^jS_y^i\rangle_{\Lambda,\beta}=\frac{1}{4}\left[(-1)^{\|x-y\|}\mathbb{P}(\easy\qquad\;\,)+\mathbb{P}\bigg(\hardest\qquad\bigg)\right]$,
\item[(e)] $\langle(S_x^i)^2(S_y^j)^2\rangle_{\Lambda,\beta}=\frac{5}{12}+\frac{1}{4}\left[\mathbb{P}\bigg(\easiest\qquad\bigg)+\mathbb{P}\bigg(\dirconno\qquad\bigg)-\mathbb{P}\bigg(\dirconnth\qquad\bigg)\right]$.
\end{description}
\end{correlations}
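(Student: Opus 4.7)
The plan is to use the intertwining $V$ from Section~\ref{sec spin1model} to convert each correlation $\langle\,\cdot\,\rangle_{\Lambda,\beta}$ into a correlation $\langle V\cdot V^*\rangle^{(1)}_{\tilde\Lambda,\beta}$ on $\tilde\Lambda$, in which every $S^i$ becomes $R^i=\sigma^i\otimes\mathbbm{1}+\mathbbm{1}\otimes\sigma^i$. Expanding the resulting products into sums of products of single-site spin-$\tfrac{1}{2}$ operators $\sigma^\alpha_{x_i}$, I apply the random loop representation~\eqref{randlooprep}, Proposition~\ref{loopspin prop} and Lemma~\ref{zfactor} to translate each trace into an integral over loop realisations weighted by sums over compatible configurations. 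By SU(2) invariance I take $i=3$ for (a) and (b) and $(i,j)=(1,2)$ for (c), (d), (e); the off-diagonal identities then reduce to computations involving only $\sigma^3$ together with the spin-$\tfrac{1}{2}$ identities $(\sigma^i)^2=\tfrac{1}{4}$ and $\sigma^1\sigma^2=\tfrac{i}{2}\sigma^3$.

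For (a), $R_x^3R_y^3=\sum_{i,j\in\{0,1\}}\sigma_{x_i}^3\sigma_{y_j}^3$, and Proposition~\ref{loopspin prop} adapts readily to the case where the marked points are any pair $(x_i,y_j)$ rather than $(x_0,y_0)$. For each such pair the sum over $\sigma\in\Sigma^{(1)}(\omega)$ vanishes unless $x_i$ and $y_j$ lie on the same loop, because the global sign of a loop containing only one of them can be flipped independently; when they are on the same loop, the rule that spins flip at bars forces $\sigma_{x_i,0}\sigma_{y_j,0}=(-1)^{\|x-y\|}/4$, by the parity-of-bars matching used in the proof of Lemma~\ref{zfactor}. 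Dividing by $Z^{(1)}_{\tilde\Lambda,\beta}=Y_2^{J_1,J_2}(\beta,\Lambda)$ and using that $\mathbb{P}(\easyij\qquad\;\,)$ is independent of $i,j$, the four contributions combine to give (a).

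For (b), (c) and (d), the same scheme applies with insertions of up to four diagonal $\sigma^3$ operators at the sites of $\{x_0,x_1,y_0,y_1\}$ at time $0$. The sum over $\Sigma^{(1)}(\omega)$ of the associated spin product vanishes unless every loop meets $\{x_0,x_1,y_0,y_1\}$ in an even number of points, leaving only the three connectivity patterns corresponding to the events $\mathbb{P}\bigg(\easiest\qquad\bigg)$, $\mathbb{P}\bigg(\hardest\qquad\bigg)$ and $\mathbb{P}\bigg(\hard\qquad\bigg)$. For (b) the subtraction $\langle(S_x^i)^2\rangle\langle(S_y^i)^2\rangle$ cancels the contribution of realisations where $\{x_0,x_1\}$ and $\{y_0,y_1\}$ lie on disjoint loops and produces the $-1/36$ constant. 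For (c) and (d) the identity $\sigma^1\sigma^2=\tfrac{i}{2}\sigma^3$ reduces $S_x^1S_x^2$ to a sum of a diagonal term and a two-site term, and the only difference between $S_y^1S_y^2$ and $S_y^2S_y^1$ is the sign of the commutator piece $\tfrac{i}{2}S_y^3$, which flips the sign in front of $(-1)^{\|x-y\|}\mathbb{P}(\easy\qquad\;\,)$ while the anticommutator piece leaves the $\mathbb{P}\bigg(\hardest\qquad\bigg)$ term unchanged; this is the source of the sign flip between (c) and (d).

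The main obstacle is (e): although $(S_x^1)^2$ and $(S_y^2)^2$ commute as operators at distinct sites, the intertwined insertion $\sigma_{x_0}^1\sigma_{x_1}^1\sigma_{y_0}^2\sigma_{y_1}^2$ is not simultaneously diagonalisable, so the straightforward counting of (a)--(d) does not apply directly. I would decompose $\sigma^1=\tfrac{1}{2}(\sigma^++\sigma^-)$ and $\sigma^2=\tfrac{1}{2i}(\sigma^+-\sigma^-)$ and expand the four-point insertion into sixteen products of ladder operators. On each loop, the trace of such a product is nonzero only when the sequence of raisings and lowerings brings the spin state back to its starting value as the loop is traversed; this pairing constraint is sensitive to the cyclic order in which $x_0,x_1,y_0,y_1$ are visited, which is precisely the information encoded by the directed events $\mathbb{P}\bigg(\dirconno\qquad\bigg)$, $\mathbb{P}\bigg(\dirconnt\qquad\bigg)$ and $\mathbb{P}\bigg(\dirconnth\qquad\bigg)$. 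Keeping track of the factors of $i$ and $-i$ produced by the ladder-operator expansion yields the signed combination $\mathbb{P}\bigg(\dirconno\qquad\bigg)-\mathbb{P}\bigg(\dirconnth\qquad\bigg)$, while the undirected connectivity patterns contribute the $5/12$ constant and the $\tfrac{1}{4}\mathbb{P}\bigg(\easiest\qquad\bigg)$ term. This directed bookkeeping, combined with the phase tracking of Lemma~\ref{zfactor}, is where most of the work is concentrated; everything else reduces to the sum-over-loops combinatorics already used for the earlier parts.
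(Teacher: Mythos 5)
Your overall route is the same as the paper's: intertwine via $V$ so that each $S^i$ becomes $R^i=\sigma^i\otimes\mathbbm{1}+\mathbbm{1}\otimes\sigma^i$, expand into products of single-site $\sigma$'s, and evaluate with the loop representation, Proposition \ref{loopspin prop} and Lemma \ref{zfactor}; your treatment of (a) and the connectivity counting in (b) match the paper in substance. However, there is a genuine gap at the heart of (c) and (d). It is not true that these parts reduce to ``insertions of diagonal $\sigma^3$ operators'': with your choice $(i,j)=(1,2)$ one has $R_x^1R_x^2=\tfrac{i}{2}(\sigma^3\otimes\mathbbm{1}+\mathbbm{1}\otimes\sigma^3)_x+(\sigma^1\otimes\sigma^2+\sigma^2\otimes\sigma^1)_x$, and the anticommutator piece $\{R^1,R^2\}=2(\sigma^1\otimes\sigma^2+\sigma^2\otimes\sigma^1)$ is off-diagonal at \emph{both} half-sites. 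Your commutator/anticommutator split does give (c)$-$(d) cheaply from part (a) (a legitimate shortcut the paper does not use), but (c)$+$(d) requires evaluating the fully off-diagonal four-point insertion $\langle(\sigma^1\otimes\sigma^2+\sigma^2\otimes\sigma^1)_x(\sigma^1\otimes\sigma^2+\sigma^2\otimes\sigma^1)_y\rangle^{(1)}_{\tilde\Lambda,\beta}$, and you simply assert its value is $\mathbb{P}$ of the two-crossing-loops event. In the paper this is exactly where the work lies: one must (i) show the mixed terms containing a single $\sigma^2$ vanish (a Hermiticity/reality argument), (ii) pass to the enlarged configuration sets $\Sigma^{(1)}_{x_i,y_j}(\omega)$ with spin flips at time zero and apply the $(-1)^{\|x-y\|}$ factors of Lemma \ref{zfactor}, and (iii) track the cyclic order in which $x_0,x_1,y_0,y_1$ are visited when all four lie on one loop, producing directed-event probabilities that only afterwards cancel by the $y_0\leftrightarrow y_1$ symmetry. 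None of this appears in your proposal; the directed bookkeeping is deferred entirely to (e), whereas it is already indispensable for (c) and (d).

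Two further points. First, the constants: both $-\tfrac{1}{36}$ and $\tfrac{5}{12}$ rest on the identity $\mathbb{P}(x_0\leftrightarrow x_1)=\tfrac13$, which the paper derives from $\langle(S_x^3)^2\rangle_{\Lambda,\beta}=\tfrac23$ and $(R_x^3)^2=\tfrac12\mathbbm{1}+2(\sigma^3\otimes\sigma^3)_x$; you never establish this, and your descriptions of where the constants come from are off (the disjoint-loop realisations are not cancelled by the subtraction in (b) --- they survive as the $\tfrac14\mathbb{P}$ term --- and the $\tfrac{5}{12}$ in (e) comes from the identity and single-site pair terms, not from ``undirected connectivity patterns''). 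Second, your ladder-operator plan for (e) is plausible and close in spirit to what the paper does (the paper treats (e) by the same technique as (c)), but as written it is only a program: the spin-dependent factors $\pm i$ from the $\sigma^2$'s, combined with the $z$-factors, are precisely what produce the signed difference of directed events, and that computation is the content of the claim rather than a routine afterthought.
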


\begin{proof}
We will calculate the correlations in order. First note that each $S^i$ plays an equivalent role, hence cyclic permutations of the indices $(1,2,3)$ does not alter the expectation. Using this together with $(S^iS^j)^T=\pm(S^jS^i)$ (the sign depending on the value of $i$ and $j$) means we can take $i=3$ and $j=1$. For each we will expand using \eqref{R} and \eqref{newgibbs}.
\par
\textbf{\emph{Proof of (a)}}. First
\begin{equation}
\langle S_x^3S_y^3\rangle_{\Lambda,\beta}=\langle (\sigma^3\otimes\mathbbm{1}\otimes\sigma^3\otimes\mathbbm{1}+\sigma^3\otimes\mathbbm{1}\otimes\mathbbm{1}\otimes\sigma^3+\mathbbm{1}\otimes\sigma^3\otimes\sigma^3\otimes\mathbbm{1}+\mathbbm{1}\otimes\sigma^3\otimes\mathbbm{1}\otimes\sigma^3)_{x,y}\rangle^{(1)}_{\tilde\Lambda,\beta}.
\end{equation}
We see that due to sites $z_0$ and $z_1$ being interchangeable for $z\in\Lambda$ each of the four terms in the sum have the same expectation. We also know from Proposition \ref{loopspin prop}
\begin{equation}
 Tr (\sigma^3\otimes\mathbbm{1})_x(\sigma^3\otimes\mathbbm{1})_ye^{-\beta \tilde{H}_{\tilde{\Lambda}}}=\int\rho(\mathrm{d}\omega)\sum_{\Sigma^{(1)}(\omega)}\sigma_{x_0}\sigma_{y_0}.
\end{equation}
We note that the integral differs from zero only on the set where $x_0$ and $y_0$ are connected. If $x$ and $y$ are in different sublattices the product of spin configuration values is $-\frac{1}{4}$, if in the same sublattice the product is $\frac{1}{4}$. We can deduce that 
\begin{equation}\label{33corr}
 \langle S^3_xS^3_y\rangle_{\Lambda,\beta}=(-1)^{\|x-y\|}\mathbb{P}(\easy\qquad\;\,).
\end{equation}
\par
\textbf{\emph{Proof of (b)}}. For the second correlation
\begin{equation}
(R_x^3)^2=(\sigma^3\otimes\mathbbm{1}+\mathbbm{1}\otimes\sigma^3)^2_x=\left(\frac{1}{2}\mathbbm{1}\otimes\mathbbm{1}+2\sigma^3\otimes\sigma^3\right)_x.
\end{equation}
We see that expanding as before gives 
\begin{equation}
\langle(S_x^3)^2\rangle_{\Lambda,\beta}=\langle(R_x^3)^2\rangle_{\tilde\Lambda,\beta}^{(1)}=\frac{1}{Z_{\tilde{\Lambda},\beta}^\sim}\int\rho(\mathrm{d}\omega)\sum_{\sigma\in\Sigma^{(1)}(\omega)}\left(\frac{1}{2}+2\sigma_{x_0}\sigma_{x_1}\right)=\frac{1}{2}+\frac{1}{2}\mathbb{P}(\constconn\qquad\;\;).
\end{equation}
From this and the fact that $\langle(S_x^3)^2\rangle_{\Lambda,\beta}=\frac{1}{3}\langle\mathbf{S}_x\cdot\mathbf{S}_x\rangle_{\Lambda,\beta}=\frac{2}{3}$ we can deduce that 
\begin{equation}\label{01prob}
\mathbb{P}(\constconn\qquad\;\;)=\frac{1}{3}.
\end{equation}
For the first term in the correlation we again note that $\langle (S_x^3)^2(S_y^3)^2\rangle_{\Lambda,\beta}=\langle (R_x^3)^2(R_y^3)^2\rangle_{\Lambda,\beta}^1$. We then calculate as before:
\begin{equation}\label{correxp}
\begin{aligned}
(R_x^3)^2(R_y^3)^2=&(\sigma^3\otimes\mathbbm{1}+\mathbbm{1}\otimes\sigma^3)^2_x(\sigma^3\otimes\mathbbm{1}+\mathbbm{1}\otimes\sigma^3)^2_y
\\
=&\left(\frac{1}{2}\mathbbm{1}\otimes\mathbbm{1}+2\sigma^3\otimes\sigma^3\right)_x\left(\frac{1}{2}\mathbbm{1}\otimes\mathbbm{1}+2\sigma^3\otimes\sigma^3\right)_y
\\
=&\left(\frac{1}{4}\mathbbm{1}^{\otimes 4}+\sigma^3\otimes\sigma^3\otimes\mathbbm{1}\otimes\mathbbm{1}+\mathbbm{1}\otimes\mathbbm{1}\otimes\sigma^3\otimes\sigma^3+4(\sigma^3)^{\otimes 4}\right)_{x,y}.
\end{aligned}
\end{equation}
Now following through the same expansion as before we have
\begin{equation}
\langle (S_x^3)^2(S_y^3)^2\rangle_{\Lambda,\beta}=\frac{1}{Z}\int\rho(\mathrm{d}\omega)\sum_{\sigma\in\Sigma^{(1)}(\omega)}\left(\frac{1}{4}+\sigma_{x_0}\sigma_{x_1}+\sigma_{y_0}\sigma_{y_1}+4\sigma_{x_0}\sigma_{x_1}\sigma_{y_0}\sigma_{y_1}\right).
\end{equation}
Using \eqref{01prob} and noting that the last term in the sum requires either two loops containing two of the sites $x_0,x_1,y_0,y_1$ each or one loop containing all four sites to give a non-zero contribution to the sum overall (if one site is not connected to any other its spin value can be $\pm\frac{1}{2}$ independently of other sites, averaging the integral on this set to zero) we have
\begin{equation}
\langle (S_x^3)^2(S_y^3)^2\rangle_{\Lambda,\beta}-\langle(S_x^3)^2\rangle_{\Lambda,\beta}\langle(S_y^3)^2\rangle_{\Lambda,\beta}=-\frac{1}{36}+\frac{1}{4}\mathbb{P}\bigg(\easiest\qquad\bigg)+\frac{1}{2}\mathbb{P}\bigg(\hardest\qquad\bigg)+\frac{1}{4}\mathbb{P}\bigg(\hard\qquad\bigg).
\end{equation}
The probability $\mathbb{P}\bigg(\hardest\qquad\bigg)$ comes with twice the weight because there are two ways to connect both sites at $x$ to different sites at $y$ (but only one way both sites at $x$ can be connected and both sites at $y$ can be connected).
\par
\textbf{\emph{Proof of (c)}}. For the third correlation we use the same expansion
\begin{equation}
\langle S_0^1S_0^3S_x^1S_x^3\rangle_{\Lambda,\beta}=\frac{4}{Z_{\tilde{\Lambda},\beta}^{(1)}}Tr (\sigma^1\sigma^3\otimes\mathbbm{1}+\sigma^1\otimes\sigma^3)_x(\sigma^1\sigma^3\otimes\mathbbm{1}+\sigma^1\otimes\sigma^3)_yP_\Lambda e^{-\beta\tilde{H}^{(1)}_{\tilde{\Lambda}}}.
\end{equation}
The factor $4$ has come from grouping together terms such as $\sigma^1\otimes\sigma^3$ and $\sigma^3\otimes\sigma^1$ that have the same expectation. A useful observation at this stage is that $\sigma^1\sigma^3=\frac{-i}{2}\sigma^2$. Calculating further and noting that the two cross terms in the above product have the same expectation we see
\begin{equation}
\langle S_0^1S_0^3S_x^1S_x^3\rangle_{\Lambda,\beta}=4\left\langle-\frac{1}{4}\sigma^2\otimes\mathbbm{1}\otimes\sigma^2\otimes\mathbbm{1}-i\sigma^2\otimes\mathbbm{1}\otimes\sigma^1\otimes\sigma^3+\sigma^1\otimes\sigma^3\otimes\sigma^1\otimes\sigma^3\right\rangle_{\tilde{\Lambda},\beta}^{(1)}.
\end{equation}
From the symmetric roles of $\sigma^i$ for $i=1,2,3$ and part a) we know the first term is $-\frac{(-1)^{\|x-y\|}}{4}\mathbb{P}(\easy\qquad\;\,)$. For the second term we need $\langle\sigma^2\otimes\mathbbm{1}\otimes\sigma^1\otimes\sigma^3\rangle_{\tilde{\Lambda},\beta}^{(1)}$. This is the expectation of a matrix with purely imaginary entries, due to the one appearance of $\sigma^2$. Now we note three pieces of information that allow us to calculate this expectation. All the matrices $e^{-\beta\tilde{H}^{(1)}_{\tilde{\Lambda}}}, P_\Lambda, \sigma^1, \sigma^2$ and $\sigma^3$ are Hermitian. The matrices $\sigma^i$ are acting on different sites in $\tilde{\Lambda}$ and hence they commute. $e^{-\beta\tilde{H}^{(1)}_{\tilde{\Lambda}}}$ and $P_\Lambda$ commute and have real entries. This means taking the adjoint of the operator leaves the expectation unchanged. Because the operator is purely imaginary we should obtain the negative of what we started with on taking the 
adjoint. Hence the correlation must be zero.  
\par
For the last term we expand as in Proposition \ref{loopspin prop} and obtain
\begin{equation}
\langle\sigma^1\otimes\sigma^3\otimes\sigma^1\otimes\sigma^3\rangle_{\tilde{\Lambda},\beta}^{(1)}=\frac{1}{Z_{\tilde{\Lambda},\beta}^{(1)}}\int\rho(\mathrm{d}\omega)\sum_{\sigma\in\Sigma^{(1)}_{x_0,y_0}(\omega)}z_{x_0,y_0}(\sigma,\omega)\langle \sigma_{\cdot,0+}|\sigma^1\otimes\sigma^3\otimes\sigma^1\otimes\sigma^3|\sigma_{\cdot,0-}\rangle
\end{equation}
Here $\sigma_{\cdot,0\pm}$ denotes the full spin configuration for some $\sigma\in\Sigma_{x_0,y_0}(\omega)$ at time $0\pm$ respectively. Also note that as $\sigma^1$ flips spins and $\sigma^3$ does not the set of space-time spin configurations $\Sigma^{(1)}_{x_0,y_0}(\omega)$ is the correct set. We could expand the set of configurations we sum over to include configurations that flip spin at sites $x_1$ and $y_1$ at time zero but these would not be compatible with $\sigma^3$ acting at time zero at those sites hence they would not contribute. Recall that a loop that contains a site that spin flips at time zero cannot contain only one such site, hence the set of configurations that contribute to the integral is $E[\easy\qquad\;\,]$. Again the set of configurations where one of the sites $x_1$ or $y_1$ is not connected to any of the other three does not contribute to the integral. Combining these two facts we see that the only sets of configurations that contribute to the integral are those where there are two loops each containing two sites (one with $x_0$ and $y_0$ and the other with $x_1$ and $y_1$), or one loop containing all four sites. For the case of two loops there is one factor of $z_{x_0,y_0}(\sigma,\omega)=(-1)^{\|x-y\|}$ from the loop containing $x_0$ and $y_0$ (where $\sigma^1$ acts). Another factor of $(-1)^{\|x-y\|}$ comes from the loop containing $x_1$ and $y_1$ and the condition that the spin flips on crossing a bar. Note that for the first loop there is no such factor coming from spin flips at bars because $\sigma^1|\pm \frac{1}{2}\rangle=+\frac{1}{2}|\mp \frac{1}{2}\rangle$ hence there is a factor of $+\frac{1}{2}$ regardless of the spin value at the site. For the case of one loop containing all sites the order that sites occur in 
the loop is important; this is because both $\sigma^1$ and $\sigma^3$ are acting at sites in the loop. If, when following the loop, the site $y_1$ appears directly before or after the site $x_1$ then the section of loop between these sites follows the normal rule of flipping spins at bars (or if we follow the loop the other way we pass through two spin flips at time zero as well, these cancel each other out as far as the product of spins at sites $x_1$ and $y_1$ is concerned). This means we have a factor of $(-1)^{\|x-y\|}$ as before. If one of the sites $x_0$ or $y_0$ appears between sites $x_1$ and $y_1$ on the loop the effect of the extra spin flip changes the sign of the factor coming from the product of spins, giving a factor of $-(-1)^{\|x-y\|}$. As before we also have the factor $z_{x_0,y_0}(\sigma,\omega)=(-1)^{\|x-y\|}$ in both cases.
This means the correlation is
\begin{equation}
\langle\sigma^1\otimes\sigma^3\otimes\sigma^1\otimes\sigma^3\rangle_{\tilde{\Lambda},\beta}^{(1)}=\frac{1}{16}\left[\mathbb{P}\bigg(\hardest\qquad\bigg)+\mathbb{P}\bigg(\dirconno\qquad\bigg)-\mathbb{P}\bigg(\dirconnt\qquad\bigg)\right].
\end{equation}
Recall that the arrows in the events show the direction that the loop is traversed. From this we can finally deduce that
\begin{equation}
\langle S_x^1S_x^3S_y^1S_y^3\rangle_{\Lambda,\beta}=\frac{1}{4}\left[-(-1)^{\|x-y\|}\mathbb{P}(\easy\qquad\;\,)+\mathbb{P}\bigg(\hardest\qquad\bigg)+\mathbb{P}\bigg(\dirconno\qquad\bigg)-\mathbb{P}\bigg(\dirconnt\qquad\bigg)\right].
\end{equation}
Now we note that the last two probabilities are equal (swap $y_0$ and $y_1$).
\par
The correlations (d) and (e) follow easily using the same techniques and considerations as above.
\end{proof}
From this we can easily obtain some bounds on these correlations that are potentially very difficult to obtain without the loop model.
\\
\begin{correlationinequalities}
For $i,j=1,2,3$, $x\neq y$, $i\neq j$ and $\Lambda$ bipartite
\begin{description}
\item[(a)] $\langle (S_x^i)^2(S_y^i)^2\rangle_{\Lambda,\beta}-\langle (S_x^3)^2\rangle_{\Lambda,\beta}\langle (S_y^3)^2\rangle_{\Lambda,\beta}\leq \frac{1}{18}+\frac{3}{4}(-1)^{\|x-y\|}\langle S_x^iS_y^i\rangle_{\Lambda,\beta}$

\item[(b)] $\langle S_x^iS_x^jS_y^iS_y^j\rangle_{\Lambda,\beta}\leq \frac{1}{4}((-1)^{\|x-y\|}-1)\langle S_x^iS_y^i\rangle_{\Lambda,\beta}$

\item[(c)] $\langle S_x^iS_x^jS_y^jS_y^j\rangle_{\Lambda,\beta}\leq \frac{1}{4}((-1)^{\|x-y\|}+1)\langle S_x^iS_y^i\rangle_{\Lambda,\beta}$

\item[(d)] $\langle S_x^iS_x^jS_y^iS_y^j\rangle_{\Lambda,\beta} \left\{
  \begin{array}{l l}
    \geq 0 & \quad \text{if $\|x-y\|$ is odd}\\
    \leq 0 & \quad \text{if $\|x-y\|$ is even}
  \end{array} \right.$
  
\item[(e)] $\langle S_x^iS_x^jS_y^jS_y^i\rangle_{\Lambda,\beta} \left\{
  \begin{array}{l l}
    \leq 0 & \quad \text{if $\|x-y\|$ is odd}\\
    \geq 0 & \quad \text{if $\|x-y\|$ is even}
  \end{array} \right.$
\end{description}
\end{correlationinequalities}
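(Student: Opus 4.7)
The approach is to substitute the explicit loop-probability formulas from Proposition \ref{spincorrprop} into each inequality and verify that the resulting statement follows from a handful of elementary set inclusions among loop events.

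First I would use Proposition \ref{spincorrprop}(a) to rewrite every occurrence of $\langle S_x^iS_y^i\rangle_{\Lambda,\beta}$ on the right-hand side as $(-1)^{\|x-y\|}\mathbb{P}(x_0\sim y_0)$. After this substitution, parts (b) and (c) reduce, upon cancelling the common factor $\frac{1}{4}$, to the single inequality $\mathbb{P}(\text{hardest}) \leq \mathbb{P}(\text{easy})$. This is immediate from the set inclusion $\{\text{hardest}\} \subset \{x_0 \sim y_0\}$ together with the fact that $\mathbb{P}(x_i \sim y_j)$ is independent of $i,j$.

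For part (a), after the same substitution the inequality becomes
\begin{equation*}
\frac{1}{4}\mathbb{P}(\text{easiest}) + \frac{1}{2}\mathbb{P}(\text{hardest}) + \frac{1}{4}\mathbb{P}(\text{hard}) \leq \frac{1}{12} + \frac{3}{4}\mathbb{P}(\text{easy}).
\end{equation*}
I would bound the easiest-event probability via the inclusion $\{\text{easiest}\} \subset \{x_0 \sim x_1\}$ combined with the explicit value $\mathbb{P}(x_0 \sim x_1) = \frac{1}{3}$ from \eqref{01prob}, giving $\frac{1}{4}\mathbb{P}(\text{easiest}) \leq \frac{1}{12}$. For the remaining two terms, observe that hard and hardest are disjoint sub-events of $\{x_0 \sim y_0\}$, so $\mathbb{P}(\text{hard}) + \mathbb{P}(\text{hardest}) \leq \mathbb{P}(\text{easy})$, whence $\frac{1}{2}\mathbb{P}(\text{hardest}) + \frac{1}{4}\mathbb{P}(\text{hard}) \leq \frac{1}{2}\mathbb{P}(\text{easy}) \leq \frac{3}{4}\mathbb{P}(\text{easy})$.

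Parts (d) and (e) are pure sign statements. Substituting $(-1)^{\|x-y\|} = \pm 1$ according to the parity of $\|x-y\|$ into the identities of Proposition \ref{spincorrprop}(c),(d), then invoking $\mathbb{P}(\text{hardest}) \leq \mathbb{P}(\text{easy})$ together with the non-negativity of loop probabilities, handles all four sub-cases. There is no real obstacle once Proposition \ref{spincorrprop} is in hand: the corollary is essentially book-keeping, the only combinatorial inputs being the elementary set inclusions $\{\text{hard}\}, \{\text{hardest}\} \subset \{x_0 \sim y_0\}$ and $\{\text{easiest}\} \subset \{x_0 \sim x_1\}$.
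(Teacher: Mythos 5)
Your proposal is correct and follows essentially the same route as the paper: substitute the loop-probability identities of Proposition \ref{spincorrprop} and conclude from the inclusions of the ``easiest'' event in $\{x_0\sim x_1\}$ (with $\mathbb{P}(x_0\sim x_1)=\tfrac13$) and of the ``hard''/``hardest'' events in $\{x_0\sim y_0\}$. You merely spell out the bookkeeping that the paper calls ``immediate''.
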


\begin{proof}
All inequalities are immediate from Proposition \ref{spincorrprop} when we note that $E\bigg[\easiest\qquad\bigg]$ is a sub-event of $E[\constconn\qquad\;\,]$ and $E\bigg[\hardest\qquad\bigg]$ is a sub-event of $E[\easy\qquad\;\,]$.
\end{proof}
Other inequalities of interest involve correlations between nearest neighbour points. Equation (29) in \cite{T-T-I} allows us to obtain the following bound in the ground state ($\beta\to\infty$)
\begin{equation}
 \mathbb{P}(\zoeasy\qquad\quad)\geq \frac{1}{d}\frac{2J_2-3J_1}{4J_2-3J_1}.
\end{equation}
Now looking at Proposition \ref{spincorrprop} (b) for $\|x-y\|=1$ (say $x=0$, $y=e_1$) we see that if $J_1=0$ then the event $\mathbb{P}(\zoeasy\qquad\quad)$ puts us into the case of one of the last two probabilities. Ignoring the first probability (as it is difficult to control) we obtain (for $J_2>0=J_1$)
\begin{equation}
 \langle (S_0^3)^2(S_{e_1}^3)^2\rangle_{\Lambda,\beta}-\langle (S_0^3)^2\rangle_{\Lambda,\beta}\langle (S_{e_1}^
 3)^2\rangle_{\Lambda,\beta}\geq -\frac{1}{36}+\frac{1}{8d}.
\end{equation}
This bound is positive for $d\leq 4$, however it was not sufficient to deduce nematic order from a theorem analogous to \ref{thmneel} but concerning the nematic correlation function. 

\section{Occurrence of N\'eel Order}
\label{sec neelorder}
\subsection{Setting and results}

We take the cubic lattice in $\mathbb{Z}^d$ with side length L, denoted $\Lambda_L$, with periodic boundary conditions. The edge set, $\mathcal{E}_L$, will consist of pairs of nearest neighbour lattice points. Precisely
\begin{align}
\Lambda_L =&\left\{-\frac{L}{2}+1,\dots,\frac{L}{2}\right\}^d,
\\
\mathcal{E}_L=&\{\{x,y\}\subset\Lambda_L|\; \|x-y\|=1\ \text{ or } |x_i-y_i|=L-1 \text{ for some } i=1,...,d\}.
\end{align}
For the main theorem we need to introduce two integrals, they come about due to similar considerations as in \cite{K-L-S}
\begin{align}\label{integrals}
I_d=&\frac{1}{(2\pi)^d}\int_{[-\pi,\pi]^d}\left(\frac{1}{d}\sum_{i=1}^d\cos k_i\right)_+\sqrt{\frac{\varepsilon(k+\pi)}{\varepsilon(k)}}\mathrm{d}k,
\\
K_d=&\frac{1}{(2\pi)^d}\int_{[-\pi,\pi]^d}\sqrt{\frac{\varepsilon(k+\pi)}{\varepsilon(k)}}\mathrm{d}k.
\end{align}
Here $(\cdot)_+$ denotes the positive part and $\varepsilon(k)=2\sum_{i=1}^d(1-\cos(k_i))$.
\\
\begin{main}
\label{thmneel}
Let $d\geq 3$ and $J_1\leq 0\leq J_2$, for $L$ even we have the two bounds
\begin{align}
\lim_{\beta\to\infty}\lim_{L\to\infty}\frac{1}{|\Lambda_L|}\sum_{x\in\Lambda_L}(-1)^{\|x\|}\langle S_0^3S_x^3\rangle^{J_1,J_2}_{\Lambda_L,\beta}\geq
\begin{cases}
\sqrt{\mathbb{P}(\zoeasy\qquad\quad)}\left(\sqrt{\mathbb{P}(\zoeasy\qquad\quad)}-I_d\sqrt{\frac{1}{4}-\frac{J_2}{J_1}}\right),
\\
 1-K_d\sqrt{\mathbb{P}(\zoeasy\qquad\quad)}\sqrt{\frac{1}{4}-\frac{J_2}{J_1}}.
\end{cases}
\end{align}
\end{main}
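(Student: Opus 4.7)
The plan is to push through the Dyson--Lieb--Simon / Kennedy--Lieb--Shastry programme (reflection positivity, Gaussian-domination infrared bound, Falk--Bruch conversion to thermal correlators) on the BBH Hamiltonian, with the crucial new input being Corollary \ref{correlationinequalities}, which uses the loop representation to bound the four-spin correlators that the biquadratic term forces into the double commutator. Reflection positivity on the quadrant $J_1\le 0\le J_2$ follows by adding the two RP Hamiltonians of \cite{D-L-S} (bilinear) and \cite{L} (biquadratic); Gaussian domination then gives the Duhamel infrared bound \eqref{DLSIRB}.

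Next I would compute the double commutator $D(k):=|\Lambda|^{-1}\langle[\hat S^3_{-k},[H,\hat S^3_k]]\rangle$, splitting $H=H_1+H_2$. A standard expansion using the canonical commutation relations, SU(2) invariance, and Proposition \ref{spincorrprop}(a) at nearest-neighbor distance yields $D_1(k)=2(-J_1)\varepsilon(k)p$, with $p:=\mathbb{P}(\zoeasy\qquad\quad)$. The biquadratic part $H_2$ generates precisely the four-spin correlators appearing in Corollary \ref{correlationinequalities}; applying those inequalities and exploiting the alternating $(-1)^{\|x-y\|}$ signs they carry, together with the Fourier phases, produces an extra contribution of order $J_2\,\varepsilon(k+\pi)\,p$. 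Combining these with the Falk--Bruch inequality (in its $\beta\to\infty$ form) and the infrared bound of Step~1 yields the pointwise estimate
\begin{equation*}
g(k):=\lim_{\beta\to\infty}\tfrac{1}{|\Lambda|}\langle\hat S^3_k\hat S^3_{-k}\rangle\leq \sqrt{p}\,\sqrt{\tfrac{1}{4}-\tfrac{J_2}{J_1}}\,\sqrt{\tfrac{\varepsilon(k+\pi)}{\varepsilon(k)}},\qquad k\ne\pi,
\end{equation*}
in which the constants in $D(k)$ are tracked so that the factor $\sqrt{1/4-J_2/J_1}$ emerges exactly.

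The two stated inequalities now follow by two distinct Fourier manipulations of this pointwise bound. Denote by $M^2$ the quantity on the left-hand side of the theorem and set $\gamma(k):=d^{-1}\sum_{i=1}^d\cos k_i$. For the first inequality, the Fourier identity $p=-\langle S_0^3 S_{e_1}^3\rangle=-|\Lambda|^{-1}\sum_k\gamma(k)g(k)$, combined with $\gamma(\pi)=-1$, rearranges to
\[
M^2=p+\tfrac{1}{|\Lambda|}\sum_{k\ne\pi}\gamma(k)g(k)\geq p-\tfrac{1}{|\Lambda|}\sum_{k\ne\pi}\gamma(k)_+\,g(k);
\]
inserting the pointwise bound and passing to the infinite-volume integral converts the last sum into $I_d\sqrt{p}\sqrt{1/4-J_2/J_1}$, precisely because of the positive-part structure in the definition of $I_d$. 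For the second inequality, I would use a sum-rule argument built from the normalization identity $\mathbb{P}(0_0\sim 0_0)=1$ for the nonnegative function $\phi(x):=\mathbb{P}(0_0\sim x_0)$: Fourier inversion gives $|\Lambda|^{-1}\hat\phi(0)=1-|\Lambda|^{-1}\sum_{k\ne 0}\hat\phi(k)$, and since $\hat\phi$ is related to $g(\cdot-\pi)$ up to an $O(1/|\Lambda|)$ correction, the same pointwise bound converts the remaining sum into an integral matching $K_d\sqrt{p}\sqrt{1/4-J_2/J_1}$ in the thermodynamic limit.

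The principal obstacle is the biquadratic double commutator in Step~2. As the authors underline in the introduction, naive operator-norm estimates on four-spin expectations such as $\langle S_x^1 S_y^1 S_x^3 S_y^3\rangle$ are too weak to survive the infrared procedure once $|J_1|/J_2$ is not very large. Corollary \ref{correlationinequalities}, proved in Section \ref{sec looprep} via the loop representation, supplies exactly the quartic bounds needed, and is what makes the DLS/KLS programme go through for this model.
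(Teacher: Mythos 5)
Your overall strategy is the paper's: reflection positivity on the quadrant by combining \cite{D-L-S} (bilinear) with \cite{L} (biquadratic), Gaussian domination giving the Duhamel infrared bound, the Falk--Bruch inequality, and the double commutator controlled by the loop-model identities of Section \ref{sec looprep} (the paper cites Proposition \ref{spincorrprop} directly, you cite the Corollary derived from it --- same input). Your pointwise estimate $\sqrt{p}\sqrt{1/4-J_2/J_1}\,\sqrt{\varepsilon(k+\pi)/\varepsilon(k)}$, with $p$ the nearest-neighbour connection probability, is exactly what the paper extracts, and your first ($I_d$) bound via the sum rule at $y=e_1$ with $\langle S_0^3S_{e_1}^3\rangle=-p$ is also the paper's argument (the paper derives both bounds from the single Fourier identity evaluated at $y=e_1$ and $y=0$). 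Minor bookkeeping: with your unrotated $g(k)$ the lower bound should use $\gamma(k)_-$, which after $k\mapsto k+\pi$ reproduces $I_d$; as written you pair $\gamma(k)_+$ with the rotated-frame ratio, two frame slips that happen to cancel. Likewise $D_1(k)$ carries $\varepsilon(k+\pi)$ in the frame rotated by $U=\prod_{x\in\Lambda_B}e^{i\pi S_x^2}$, not $\varepsilon(k)$; your final bound is nevertheless the correct one.

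The genuine gap is in your derivation of the second ($K_d$) bound. You invoke the sum rule for $\phi(x):=\mathbb{P}(0_0\leftrightarrow x_0)$, normalized by $\phi(0)=1$, and assert that $\hat\phi$ agrees with the spin structure function up to $O(1/|\Lambda|)$. That is false: by Proposition \ref{spincorrprop}(a) the two functions agree (up to the staggering sign) for $x\neq 0$, but at $x=0$ one has $\langle (S_0^3)^2\rangle_{\Lambda,\beta}=\tfrac23$ while $\phi(0)=1$, so they differ by $\tfrac13\delta_{x,0}$ in position space, i.e.\ by the constant $\tfrac13$ in \emph{every} Fourier mode, not by $O(1/|\Lambda|)$. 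Transferring the quantum infrared/Falk--Bruch bound to $\hat\phi$ therefore leaves an uncancelled $\tfrac13$ in the sum over $k$, and your argument only delivers $\tfrac23-K_d\sqrt{p}\sqrt{1/4-J_2/J_1}$ rather than $1-K_d\sqrt{p}\sqrt{1/4-J_2/J_1}$. The paper does not route through $\phi$ at all: it evaluates the same Fourier identity at $y=0$ directly on the spin correlation. If you want the loop normalization $1$ as the leading constant, you need an infrared bound for the loop two-point function itself, i.e.\ reflection positivity and Gaussian domination run inside the loop model (the option the paper alludes to with ``the result can be proved directly using the loop model'', cf.\ \cite{U}); it cannot be obtained by shifting the quantum bound onto $\hat\phi$ as proposed.
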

Positivity of this lower bound implies N\'eel order for those values of $J_1$ and $J_2$ in the spin-1 system at low enough temperature. This result is also equivalent to the occurrence of macroscopic loops in the loop model.

Of course we see that for $-J_1,J_2>0$ the positivity of the lower bound doesn't depend on the value of $J_1^2+J_2^2$, only on the ratio $-J_1/J_2$. This means there corresponds an angle, measured from the $J_1$ axis, such that for angles less than this we have proved the existence of N\'eel order/macroscopic loops. The bound is positive if
\begin{equation}
 \sqrt{\mathbb{P}(\zoeasy\qquad\quad)}<\frac{1}{K_d}\sqrt{\frac{-4J_1}{-J_1+4J_2}} \text{ or } \sqrt{\mathbb{P}(\zoeasy\qquad\quad)}>I_d\sqrt{\frac{1}{4}-\frac{J_2}{J_1}}.
\end{equation}
One of these is certainly satisfied if $I_dK_d<(-4J_1)/(-J_1+4J_2)$. A table of values of $I_d$ and $K_d$ for various $d$ is presented in \cite{U}. If $J_1^2+J_2^2=1$ this is the case in $d=3$ for $J_1<-0.42$, $d=4$ for $J_1<-0.28$ and $d=5$ for $J_1<-0.22$.
\par
A similar theorem concerning nematic order (corresponding to correlation (b) in \ref{spincorrprop}) can be proved using the same methods. Unfortunately showing that one of the lower bounds obtained was positive proved difficult due to the seemingly unavoidable issue of bounding more complicated connection probabilities from below.

\subsection{Proof of Theorem \ref{thmneel}}

The result can be proved directly using the loop model. However in light of the proof of reflection positivity for the $J_2$ interaction \cite{L}, we can complete the proof more directly, appealing to Proposition \ref{spincorrprop} when required.
To begin we define a Hamiltonian
\begin{equation}
\tilde{H}^{J_1,J_2}_{\Lambda_L,\mathbf{h}}=\sum_{\{x,y\}\in\mathcal{E}}(J_1(S_x^1S_y^1-S_x^2S_y^2+S_x^3S_y^3)-J_2(S_x^1S_y^1-S_x^2S_y^2+S_x^3S_y^3)^2)+J_1\sum_{x\in\Lambda_L}h_xS_x^3.
\end{equation}
denote its Gibbs states $\langle\cdot\rangle^\sim$.
Note that for $\mathbf{h}=\mathbf{0}$ this Hamiltonian is unitarily equivalent to $H^{J_1,J_2}_{\Lambda_L}$ as can be seen from defining $U=\prod_{x\in\Lambda_B}e^{i\pi S_x^2}$ where $\Lambda_B\subset\Lambda_L$ is the odd sublattice. It has been shown \cite{D-L-S,L} that this Hamiltonian is reflection positive for $J_1\leq 0\leq J_2$. For external field $\mathbf{v}=(v_x)_{x\in\Lambda_L}$ for $v_x\in\mathbb{R}$ define
\begin{equation}
H(\mathbf{v})=\tilde{H}^{J_1,J_2}_{\Lambda_L,\mathbf{0}}+J_1\sum_{x\in\Lambda_L}(\Delta\mathbf{v})_xS_x^3
\end{equation}
and
\begin{equation}
Z(\mathbf{v})=Tr e^{-\beta H(\mathbf{v})}.
\end{equation}
By the usual methods for reflection positivity, using \cite{L} to handle the $J_2$ terms we can show that for any $\mathbf{v}$ we have $Z(\mathbf{v})\leq Z(\mathbf{0})e^{-J_1 \beta (\mathbf{v},\Delta\mathbf{v})}$. By choosing $\mathbf{v}$ according to $v_x=\eta\cos (k\cdot x)$ for $\eta$ small we can recover the infrared bound in \cite{D-L-S}:
\begin{equation}\label{infraredbound}
\widehat{(S_0^3,S_x^3)}_{Duh}(k)\leq\frac{1}{(-2J_1)\varepsilon(k)}
\end{equation}
where
\begin{equation}
(A,B)_{Duh}=\frac{1}{Z^{J_1,J_2}_{\Lambda_L,\beta}}\int_0^\beta\mathrm{d}s Tr A^* e^{-sH^{J_1,J_2}_{\Lambda_L}}Be^{-(\beta-s)H^{J_1,J_2}_{\Lambda_L}},
\end{equation}
is the Duhamel inner product. Now we use the Falk-Bruch inequality
\begin{equation}
\frac{1}{2}\langle A^*A+AA^*\rangle^{J_1,J_2}_{\Lambda_L,\beta}\leq \frac{1}{2}\sqrt{(A,A)_{Duh}}\sqrt{\langle[A^*,[H^{J_1,J_2}_{\Lambda_L},A]]\rangle^{J_1,J_2}_{\Lambda_L,\beta}}+\frac{1}{\beta}(A,A)_{Duh}.
\end{equation}
After some calculation and use of the SU(2) symmetry we find
\begin{equation}
\begin{aligned}
&\langle[\hat{S}_{-k}^3,[H^{J_1,J_2}_{\Lambda_L},\hat{S}_k^3]]\rangle^{J_1,J_2}_{\Lambda_L,\beta}=|\Lambda_L|\varepsilon(k+\pi)\bigg(2J_1\langle S_0^3S_{e_1}^3\rangle^{J_1,J_2}_{\Lambda_L,\beta}-J_2\big(4\langle(S_0^1)^2(S_{e_1}^3)^2\rangle^{J_1,J_2}_{\Lambda_L,\beta}
\\
&\qquad\qquad\qquad\quad-4\langle(S_0^3)^2(S_{e_1})^2\rangle^{J_1,J_2}_{\Lambda_L,\beta}-8\langle S_0^1S_0^3S_{e_1}^1S_{e_1}^3\rangle^{J_1,J_2}_{\Lambda_L,\beta}-4\langle S_0^1S_0^3S_{e_1}^3S_{e_1}^1\rangle^{J_1,J_2}_{\Lambda_L,\beta}\big)\bigg).
\end{aligned}
\end{equation}
Using Proposition \ref{spincorrprop} we can obtain the bound
\begin{equation}\label{doublecommbound}
\langle[\hat{S}_{-k}^3,[H^{J_1,J_2}_{\Lambda_L},\hat{S}_k^3]]\rangle^{J_1,J_2}_{\Lambda_L,\beta}\leq |\Lambda_L|\varepsilon(k+\pi)(-2J_1+8J_2)\mathbb{P}(\xoeasy\qquad\quad).
\end{equation}
Now we use the Fourier identity
\begin{equation}
\frac{1}{|\Lambda_L|}\sum_{x\in\Lambda_L}(-1)^{\|x\|}\langle S_0^3S_x^3\rangle^\sim=\langle S_0^3 S_y^3\rangle^\sim-\frac{1}{|\Lambda_L|}\sum_{k\in\Lambda_L^*\setminus\{0\}}e^{ik\cdot y}\widehat{\langle S_0^3S_x^3\rangle}\vspace{-0.1cm}^\sim(k)
\end{equation}
with $y=0$ and $y=e_1$. We see from using unitary operator $U$  with \eqref{infraredbound} and \eqref{doublecommbound} we have after taking limits the bounds in Theorem \ref{thmneel}.

\subsection*{Acknowledgments}
 
I am pleased to thank my supervisor Daniel Ueltschi for his support and useful discussions. I am also grateful to Graham Hobbs, Matthew Dunlop and David O'Conner for valuable discussions. This work is supported by EPSRC as part of the MASDOC DTC at the University of Warwick. Grant No. EP/HO23364/1. I would also like to thank the two referees for useful comments and observations.

\nocite{*}

\end{document}